\newtheorem{theorem}{Theorem}[section]
\newtheorem{proposition}[theorem]{Proposition}
\newcommand{\tuple}[1]{\langle #1 \rangle}
\newcommand{\length}[1]{|#1|}
\newcommand{\ndownarrow}{\not\,\downarrow}
\newcommand{\edown}{\triangledown}
\newcommand{\eup}{\vartriangle}
\newcommand{\eright}{\vartriangleright}
\newcommand{\eleft}{\vartriangleleft}
\newcommand{\aalphabet}{\Sigma}
\newcommand{\aletter}{a}
\newcommand{\aletterbis}{b}
\newcommand{\emptyword}{\varepsilon}
\newcommand{\nodes}{N}
\newcommand{\anode}{n}
\newcommand{\atree}{t}
\newcommand{\adatatree}{\tau}
\newcommand{\locs}{Q}
\newcommand{\aloc}{q}
\newcommand{\locsbis}{R}
\newcommand{\alocbis}{r}
\newcommand{\locster}{S}
\newcommand{\atransf}{\varphi}
\newcommand{\aregaut}{\mathcal{A}}
\newcommand{\aregautbis}{\mathcal{B}}
\newcommand{\adomain}{\mathcal{D}}
\newcommand{\adatum}{D}
\newcommand{\adatumbis}{E}
\newcommand{\aconf}{G}
\newcommand{\aconfbis}{H}
\newcommand{\aconfs}{\mathcal{G}}
\newcommand{\aconfsbis}{\mathcal{H}}
\newcommand{\aconfss}{\mathbb{G}}
\newcommand{\aconfssbis}{\mathbb{H}}
\newcommand{\acaut}{\mathcal{C}}
\newcommand{\acval}{v}
\newcommand{\acvalbis}{w}
\newcommand{\ablock}{B}
\newcommand{\atreeaut}{\mathcal{T}}
\newcommand{\aquery}{p}
\newcommand{\aqual}{u}
\newcommand{\emptyquery}{\varepsilon}
\title
{Alternating Automata on Data Trees and \\
 XPath Satisfiability}
\author
{MARCIN JURDZI\'NSKI and RANKO LAZI\'C \\
 Department of Computer Science, University of Warwick, UK}
\begin{abstract}
A data tree is an unranked ordered tree
whose every node is labelled by
a letter from a finite alphabet and
an element (``datum'') from an infinite set,
where the latter can only be compared for equality.
The article considers alternating automata on data trees
that can move downward and rightward,
and have one register for storing data.
The main results are that nonemptiness over finite data trees
is decidable but not primitive recursive,
and that nonemptiness of safety automata
is decidable but not elementary.
The proofs use nondeterministic tree automata with faulty counters.
Allowing upward moves, leftward moves, or two registers,
each causes undecidability.
As corollaries, decidability is obtained for
two data-sensitive fragments of the XPath query language.
\end{abstract}
\begin{document}

\begin{bottomstuff}
This article is a revised and extended version of \cite{Jurdzinski&Lazic07}.
\newline
The second author was supported by a grant from the Intel Corporation.
\end{bottomstuff}

\maketitle

\section{Introduction}

\subsubsection*{Context}

Logics and automata for words and trees over finite alphabets
are relatively well-understood.
Motivated partly by the search for automated reasoning techniques for XML
and the need for formal verification and synthesis of infinite-state systems,
there is an active and broad research programme on
logics and automata for words and trees which have richer structure.

Initial progress made on reasoning about data words and data trees
is summarised in the survey by \citeN{Segoufin06}.
A data word is a word over $\aalphabet \times \adomain$,
where $\aalphabet$ is a finite alphabet,
and $\adomain$ is an infinite set (``domain'')
whose elements (``data'') can only be compared for equality.
Similarly, a data tree is a tree (countable, unranked and ordered)
whose every node is labelled by a pair in $\aalphabet \times \adomain$.

First-order logic for data words was considered by \citeN{Bojanczyketal06a},
and related automata were studied further by \citeN{Bjorklund&Schwentick07}.
The logic has variables which range over word positions
($\{0, \ldots, l - 1\}$ or $\mathbb{N}$),
a unary predicate for each letter from the finite alphabet,
and a binary predicate $x \sim y$ which denotes equality of data labels.
FO$^2(+1, <, \sim)$ denotes such a logic with two variables
and binary predicates $x + 1 = y$ and $x < y$.
Over finite and over infinite data words,
satisfiability for FO$^2(+1, <, \sim)$ was shown decidable
and at least as hard as nonemptiness of vector addition automata.
Whether the latter problem is elementary has been open for many years.
Extending the logic by one more variable causes undecidability.

Over data trees, FO$^2(+1, <, \sim)$ denotes a similar
first-order logic with two variables.
The variables range over tree nodes,
$+1$ stands for two predicates ``child'' and ``next sibling'',
and $<$ stands for two predicates ``descendant'' and ``younger sibling''.
Complexity of satisfiability over finite data trees
was studied by \citeN{Bojanczyketal09}.
For FO$^2(+1, \sim)$, it was shown to be in $3$\textsc{NExpTime},
but for FO$^2(+1, <, \sim)$, to be at least as hard as
nonemptiness of vector addition tree automata.
Decidability of the latter is an open question,
and it is equivalent to decidability of multiplicative exponential linear logic
\cite{deGroote&Guillaume&Salvati04}.
However, \citeN{Bjorklund&Bojanczyk07} showed that FO$^2(+1, <, \sim)$
over finite data trees of bounded depth is decidable.

XPath \cite{Clark&DeRose99} is a prominent query language
for XML documents \cite{Bray&Paoli&Sperberg-McQueen98}.
The most basic static analysis problem for XPath,
with a variety of applications, is satisfiability in the presence of DTDs.
In the two extensive articles on its complexity
\cite{Benedikt&Fan&Geerts08,Geerts&Fan05},
the only decidability result that allows negation and data
(i.e., equality comparisons between attribute values)
does not allow axes which are recursive
(such as ``self or descendant'') or between siblings.
By representing XML documents as data trees and
translating from XPath to FO$^2(+1, \sim)$, \citeN{Bojanczyketal09}
obtained a decidable fragment with negation, data and all nonrecursive axes.
Another fragment of XPath was considered
by \citeN{Halle&Villemaire&Cherkaoui06},
but it lacks concatenation, recursive axes and sibling axes.
A recent advance of \citeN{Figueira09} shows \textsc{ExpTime}-completeness
for full downward XPath, but with restricted DTDs.

An alternative approach to reasoning about data words
is based on automata with registers \cite{Kaminski&Francez94}.
A register is used for storing a datum for later equality comparisons.
Nonemptiness of one-way nondeterministic register automata
over finite data words has relatively low complexity:
\textsc{NP}-complete \cite{Sakamoto&Ikeda00} or
\textsc{PSpace}-complete \cite{Demri&Lazic09},
depending on technical details of their definition.
Unfortunately, such automata fail to provide a satisfactory notion
of regular language of finite data words,
as they are not closed under complement \cite{Kaminski&Francez94}
and their nonuniversality is undecidable \cite{Neven&Schwentick&Vianu04}.
To overcome those limitations,
one-way alternating automata with $1$ register
were proposed by \citeN{Demri&Lazic09}:
they are closed under Boolean operations,
their nonemptiness over finite data words is decidable,
and future-time fragments of temporal logics such as
LTL or the modal $\mu$-calculus extended by $1$ register
are easily translatable to such automata.
However, the nonemptiness problem over finite data words
turned out to be not primitive recursive.
Moreover, already with weak acceptance \cite{Muller&Saoudi&Schupp86}
and thus also with B\"uchi or co-B\"uchi acceptance,
nonemptiness over infinite data words is undecidable
(more precisely, co-r.e.-hard).
When the automata are restricted to those which recognise safety properties
\cite{Alpern&Schneider87} over infinite data words,
nonemptiness was shown to be \textsc{ExpSpace}-complete,
and inclusion to be decidable but not primitive recursive
\cite{Lazic06}.

\subsubsection*{Contribution}

This article addresses one of the research directions
proposed by \citeN{Segoufin06}: investigating
modal logics and automata with registers on data trees.
Nondeterministic automata with registers which can be
nondeterministically reassigned on finite binary data trees
were recently studied by \citeN{Kaminski&Tan08}:
top-down and bottom-up variants recognise the same languages,
and nonemptiness is decidable.
However, they inherit the drawbacks of
one-way nondeterministic register automata on data words:
lack of closure under complement and undecidability of nonuniversality.

We consider alternating automata that have $1$ register
and are forward, i.e., can move downward and rightward over tree nodes:
for short, ATRA$_1$.
They are closed under Boolean operations,
and we show that their nonemptiness over finite data trees is decidable.
Moreover, forward fragments of
CTL and the modal $\mu$-calculus with $1$ register
are easily translatable to ATRA$_1$ \cite{Jurdzinski&Lazic07}.
The expressiveness of ATRA$_1$ is incomparable to those of
FO$^2(+1, \sim)$ and the automata of \citeN{Kaminski&Tan08}:
for example, the latter two formalisms but not ATRA$_1$
can check whether some two leaves have equal data,
and the opposite is true of checking whether each node's datum is fresh,
i.e., does not appear at any ancestor node.
By lower-bound results for register automata on data words in
\cite{Neven&Schwentick&Vianu04,David04,Demri&Lazic09},
we have that ATRA$_1$ nonemptiness is not primitive recursive,
and that it becomes undecidable (more precisely, r.e.-hard)
if any of the following is added:
upward moves, leftward moves, or one more register.

Motivated partly by applications to XML streams
(cf., e.g., \cite{Olteanu&Furche&Bry04}),
we consider both finite and countably infinite data trees,
where horizontal as well as vertical infinity is allowed.
For ATRA$_1$ with the weak acceptance mechanism,
the undecidability result over infinite data words
\cite{Demri&Lazic09} carries over.
However, we show that, for safety ATRA$_1$,
which are closed under intersection and union but not complement,
inclusion is decidable and not primitive recursive.
When a data tree is rejected by
an automaton with the safety acceptance mechanism,
there exists an initial segment whose every extension is rejected.
We also obtain that nonemptiness of safety ATRA$_1$ is not elementary.
The latter is the most surprising result in the article:
it means that the techniques in the proof that
nonemptiness over infinite data words of
safety one-way alternating automata with $1$ register
is in \textsc{ExpSpace} cannot be lifted to trees
to obtain a $2$\textsc{ExpTime} upper bound.

The proofs of decidability
involve translating from ATRA$_1$ to
forward nondeterministic tree automata with faulty counters.
The counters are faulty in the sense that
they are subject to incrementing errors,
i.e., can spontaneously increase at any time.
That makes the transition relations downwards compatible
with a well-quasi-ordering (cf.\ \cite{Finkel&Schnoebelen01}),
which leads to lower complexities of some verification problems
than with error-free counters.

We define forward XPath to be the largest downward and rightward fragment
in which, whenever two attribute values are compared for equality,
one of them must be at the current node.
By translating from forward XPath to ATRA$_1$,
we obtain decidability of satisfiability over finite documents
and decidability of satisfiability for a safety subfragment,
both in the presence of DTDs.
In contrast to the decidable fragments of XPath mentioned previously,
forward XPath has sibling axes, recursive axes, concatenation,
negation, and data comparisons.

\section{Preliminaries}

After fixing notations for trees and data trees,
we define two kinds of forward automata
and look at some of their basic properties:
alternating automata with $1$ register on data trees, and
nondeterministic automata with counters with incrementing errors on trees.

\subsection{Trees and Data Trees}
\label{ss:trees}

For technical simplicity, we shall work with binary trees
instead of unranked ordered trees.
Firstly, as e.g.\ \citeN{Bjorklund&Bojanczyk07},
we adopt the insignificant generalisation of
considering unranked ordered forests,
in which the roots are regarded as siblings with no parent.
Secondly, the following is a standard and trivial
one-to-one correspondence between
unranked ordered forests $\atree$ and binary trees $\mathrm{bt}(\atree)$:
the nodes of $\mathrm{bt}(\atree)$ are the same as the nodes of $\atree$,
and the children of each node $\anode$ in $\mathrm{bt}(\atree)$
are the first child and next sibling of $\anode$ in $\atree$.
The correspondence works for finite as well as
infinite unranked ordered forests.
In the latter, there may be infinite (of type $\omega$)
branches or siblinghoods or both.

Without loss of generality,
each node will either have both children or be a leaf,
only nonleaf nodes will be labelled,
and the root node will be nonleaf.
Formally, a \emph{tree} is a tuple
$\tuple{\nodes, \aalphabet, \Lambda}$, where:
\begin{itemize}
\item
$\nodes$ is a prefix-closed subset of $\{0, 1\}^*$
such that $\length{\nodes} > 1$ and,
for each $\anode \in \nodes$,
either $\anode \cdot 0 \in \nodes$
and $\anode \cdot 1 \in \nodes$,
or $\anode \cdot 0 \notin \nodes$
and $\anode \cdot 1 \notin \nodes$;
\item
$\aalphabet$ is a finite alphabet;
\item
$\Lambda$ is a mapping from the nonleaf elements of $\nodes$ to $\aalphabet$.
\end{itemize}

A \emph{data tree} is a tree as above together with a mapping
$\Delta$ from the nonleaf nodes to a fixed infinite set $\adomain$.
For a data tree $\adatatree$, let $\mathrm{tree}(\adatatree)$
denote the underlying tree.

For a data tree $\adatatree$ and $l > 0$,
let the \emph{$l$-prefix} of $\adatatree$ be the data tree
obtained by restricting $\adatatree$ to nodes of length at most $l$.
For each $\aalphabet$, the set of all data trees with alphabet $\aalphabet$
is a complete metric space with the following notion of distance:
for distinct $\adatatree$ and $\adatatree'$,
let $d(\adatatree, \adatatree') = 1 / l$ where $l$ is least such that
$\adatatree$ and $\adatatree'$ have distinct $l$-prefixes.

\subsection{Alternating Tree Register Automata}
\label{ss:ATRA}

\subsubsection*{Automata}

A run of a forward alternating automaton with $1$ register
on a data tree will consist of a \emph{configuration} for each tree node.
Each configuration will be a finite set of \emph{threads},
which are pairs of an automaton state and a register value,
where the latter is a datum from $\adomain$.

Following \citeN{Brzozowski&Leiss80},
transitions will be specified by positive Boolean formulae.
For a set of states $\locs$, let $\mathcal{B}^+(\locs)$ consist of
all formulae given by the following grammar, where $\aloc \in \locs$:
\[\atransf \:::=\:
  \aloc(0, \downarrow) \,\mid\, \aloc(0, \ndownarrow) \,\mid\,
  \aloc(1, \downarrow) \,\mid\, \aloc(1, \ndownarrow) \,\mid\,
  \top \,\mid\, \bot \,\mid\,
  \atransf \wedge \atransf \,\mid\, \atransf \vee \atransf\]

Given a configuration $\aconf$ at a nonleaf tree node $\anode$,
for each thread $\tuple{\aloc, \adatum}$ in $\aconf$,
the automaton transition function provides
a formula $\atransf$ in $\mathcal{B}^+(\locs)$,
which depends on $\aloc$, on the letter labelling $\anode$,
and on whether $\adatum = \adatumbis$,
where $\adatumbis$ is the datum labelling $\anode$.
In $\atransf$, an atom $\alocbis(d, \downarrow)$
requires that thread $\tuple{\alocbis, \adatumbis}$
be in the configuration for node $\anode \cdot d$
(i.e., the register value is replaced by the datum at $\anode$),
and an atom $\alocbis(d, \ndownarrow)$ requires the same
for thread $\tuple{\alocbis, \adatum}$
(i.e., the register value is not replaced).

Formally, a \emph{forward alternating tree $1$-register automaton}
(shortly, \emph{ATRA$_1$}) $\aregaut$ is a tuple
$\tuple{\aalphabet, \locs, \aloc_I, F, \delta}$ such that:
\begin{itemize}
\item
$\aalphabet$ is a finite alphabet and
$\locs$ is a finite set of states;
\item
$\aloc_I \in \locs$ is the initial state and
$F \subseteq \locs$ are the final states;
\item
$\delta:
 \locs \times \aalphabet \times \{\mathit{tt}, \mathit{ff}\} \,\rightarrow\,
 \mathcal{B}^+(\locs)$
is a transition function.
\end{itemize}

\subsubsection*{Runs and Languages}

The semantics of the positive Boolean formulae can be given
by defining when a quadruple
$\locsbis_0^\downarrow, \locsbis_0^{\ndownarrow},
 \locsbis_1^\downarrow, \locsbis_1^{\ndownarrow}$
of subsets of $\locs$ satisfies a formula $\atransf$ in $\mathcal{B}^+(\locs)$,
by structural recursion.
The cases for the Boolean atoms and operators are standard,
and for the remaining atoms we have:
\[\locsbis_0^\downarrow, \locsbis_0^{\ndownarrow},
  \locsbis_1^\downarrow, \locsbis_1^{\ndownarrow}
  \models
  \alocbis(d, ?)
  \,\stackrel{\mathrm{def}}{\Leftrightarrow}\,
  \alocbis \in \locsbis_d^?\]

We can now define the transition relation of $\aregaut$,
which is between configurations and pairs of configurations,
and relative to a letter and a datum.
We write 
$\aconf
 \rightarrow_\aletter^\adatumbis
 \aconfbis_0, \aconfbis_1$
iff, for each thread $\tuple{\aloc, \adatum} \in \aconf$, there exist
$\locsbis_0^\downarrow, \locsbis_0^{\ndownarrow},
 \locsbis_1^\downarrow, \locsbis_1^{\ndownarrow}
 \models
 \delta(\aloc, \aletter, \adatum = \adatumbis)$
such that, for both $d \in \{0, 1\}$:
\[\{\tuple{\alocbis, \adatumbis} \,:\,
    \alocbis \in \locsbis_d^\downarrow\} \,\cup\,
  \{\tuple{\alocbis, \adatum} \,:\,
    \alocbis \in \locsbis_d^{\ndownarrow}\}
  \:\subseteq\: \aconfbis_d\]

A run of $\aregaut$ on a data tree
$\tuple{\nodes, \aalphabet, \Lambda, \Delta}$
is a mapping $\anode \mapsto \aconf_\anode$
from the nodes to configurations such that:
\begin{itemize}
\item
the initial thread is in the configuration at the root, i.e.\
$\tuple{\aloc_I, \Delta(\emptyword)} \in \aconf_\emptyword$;
\item
for each nonleaf $\anode$, the transition relation is observed, i.e.\
$\aconf_\anode
 \rightarrow_{\Lambda(\anode)}^{\Delta(\anode)}
 \aconf_{\anode \cdot 0}, \aconf_{\anode \cdot 1}$.
\end{itemize}
We say that the run is:
\begin{itemize}
\item
\emph{final} iff, for each leaf $\anode$,
only final states occur in $\aconf_\anode$;
\item
\emph{finite} iff there exists $l$ such that,
for each $\anode$ of length at least $l$,
$\aconf_\anode$ is empty.
\end{itemize}

We may regard $\aregaut$ as
an automaton on finite data trees,
a safety automaton, or
a co-safety automaton.
We say that:
\begin{itemize}
\item
$\aregaut$ \emph{accepts} a finite data tree $\adatatree$
iff $\aregaut$ has a final run on $\adatatree$;
\item
$\aregaut$ \emph{safety-accepts} a data tree $\adatatree$
iff $\aregaut$ has a final run on $\adatatree$;
\item
$\aregaut$ \emph{co-safety-accepts} a data tree $\adatatree$
iff $\aregaut$ has a final finite run on $\adatatree$.
\end{itemize}
Observe that, for finite data trees,
the three modes of $\aregaut$ coincide.

Let $\mathrm{L}^{\mathit fin}(\aregaut)$ denote
the set of all finite data trees with alphabet $\aalphabet$
that $\aregaut$ accepts,
and $\mathrm{L}^{\mathit saf}(\aregaut)$
(resp., $\mathrm{L}^{\mathit cos}(\aregaut)$) denote
the set of all data trees with alphabet $\aalphabet$
that $\aregaut$ safety-accepts (resp., co-safety-accepts).

\begin{remark}
\label{r:junk}
The valid initial and successor configurations in runs
were defined in terms of lower bounds on sets.
In other words, while running on any data tree, at each node
the automaton is free to introduce arbitrary ``junk'' threads.
However, final and finite runs were defined in terms of upper bounds on sets,
so junk threads can only make it harder
to complete a partial run into an accepting one.
This will play an important role in the proof of
decidability in Theorem~\ref{th:ATRA1}.
\end{remark}

\subsubsection*{Boolean Operations}

Given an ATRA$_1$ $\aregaut$, let $\overline{\aregaut}$ denote its dual:
the automaton obtained by replacing the set of final states with its complement
and replacing, in each transition formula $\delta(\aloc, \aletter, p)$,
every $\top$ with $\bot$, every $\wedge$ with $\vee$, and vice versa.
Observe that $\overline{\overline{\aregaut}} = \aregaut$.
Considering $\aregaut$ (resp., $\overline{\aregaut}$) as
a weak alternating automaton whose every state is of even (resp., odd) parity,
we have by \cite[Theorem~1]{Loding&Thomas00} that
$\mathrm{L}^{\mathit cos}(\overline{\aregaut})$ is the complement of
$\mathrm{L}^{\mathit saf}(\aregaut)$.
Hence, we also have that
$\mathrm{L}^{\mathit saf}(\overline{\aregaut})$ is the complement of
$\mathrm{L}^{\mathit cos}(\aregaut)$,
and that
$\mathrm{L}^{\mathit fin}(\overline{\aregaut})$ is the complement of
$\mathrm{L}^{\mathit fin}(\aregaut)$.

For each $\mathrm{m}$ of $\mathit{fin}$, $\mathit{saf}$, $\mathit{cos}$,
given ATRA$_1$ $\aregaut_1$ and $\aregaut_2$ with alphabet $\aalphabet$,
an automaton whose language in mode $\mathrm{m}$ is
$\mathrm{L}^{\mathrm m}(\aregaut_1) \cap
 \mathrm{L}^{\mathrm m}(\aregaut_2)$
(resp.,
$\mathrm{L}^{\mathrm m}(\aregaut_1) \cup
 \mathrm{L}^{\mathrm m}(\aregaut_2)$)
is constructible easily.
It suffices to form a disjoint union of $\aregaut_1$ and $\aregaut_2$,
and add a new initial state $\aloc_I$ such that
$\delta(\aloc_I, \aletter, \mathit{tt}) =
 \delta(\aloc_I^1, \aletter, \mathit{tt}) \wedge
 \delta(\aloc_I^2, \aletter, \mathit{tt})$
(resp.,
$\delta(\aloc_I, \aletter, \mathit{tt}) =
 \delta(\aloc_I^1, \aletter, \mathit{tt}) \vee
 \delta(\aloc_I^2, \aletter, \mathit{tt})$)
for each $\aletter \in \aalphabet$,
where $\aloc_I^1$ and $\aloc_I^2$ are the initial states of
$\aregaut_1$ and $\aregaut_2$.
(Since the initial thread's register value is always the root node's datum,
the formulae $\delta(\aloc_I, \aletter, \mathit{ff})$ are irrelevant.)

We therefore obtain:

\begin{proposition}
\label{pr:closure.RA}
\begin{itemize}
\item[(a)]
ATRA$_1$ on finite data trees are closed under
complement, intersection and union.
\item[(b)]
Safety ATRA$_1$ and co-safety ATRA$_1$ are dual,
and each is closed under intersection and union.
\end{itemize}
In each case, a required automaton is computable in logarithmic space.
\end{proposition}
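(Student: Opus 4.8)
The plan is to establish Proposition~\ref{pr:closure.RA} by assembling the pieces already laid out in the surrounding text, since the author has essentially reduced each claim to a construction or an earlier result. For part~(a), closure under complement of $\mathrm{L}^{\mathit fin}$ follows directly from the observation, made just above, that $\mathrm{L}^{\mathit fin}(\overline{\aregaut})$ is the complement of $\mathrm{L}^{\mathit fin}(\aregaut)$; I would simply invoke the dual construction $\overline{\aregaut}$ and cite the displayed complementation fact, noting that $\overline{\overline{\aregaut}} = \aregaut$ guarantees the construction is an involution. For intersection and union, I would use the explicit disjoint-union-plus-new-initial-state construction described before the proposition, instantiated at mode $\mathit{fin}$.

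For part~(b), the duality of safety and co-safety ATRA$_1$ is again already proved in the text: the citation to \cite[Theorem~1]{Loding&Thomas00} gives that $\mathrm{L}^{\mathit cos}(\overline{\aregaut})$ is the complement of $\mathrm{L}^{\mathit saf}(\aregaut)$ (and symmetrically), so I would just state that this is exactly the asserted duality. Closure of safety (resp.\ co-safety) automata under intersection and union then follows from the same disjoint-union construction, now instantiated at mode $\mathit{saf}$ (resp.\ $\mathit{cos}$). The one point requiring a sentence of justification is why the new-initial-state construction behaves correctly across all three acceptance modes simultaneously: because the new state $\aloc_I$ merely conjoins or disjoins the original transition formulae at the root and introduces no new leaves or nonfinal behaviour of its own, a run of the combined automaton decomposes into runs of the components that agree on finiteness and on the final condition at leaves, so final/finite/final-finite acceptance is preserved mode by mode.

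Finally, for the logarithmic-space bound I would observe that each construction is purely syntactic and local: the dual $\overline{\aregaut}$ is obtained by complementing the final set and swapping the connectives $\top/\bot$ and $\wedge/\vee$ pointwise in each transition formula, while the intersection/union automaton is a disjoint copy of the two inputs with a single fresh state whose transitions are concatenations of the components' initial formulae. None of these steps requires more than a constant number of pointers and counters over the input, so all are computable in logarithmic space.

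I do not expect a genuine obstacle here, as the proposition is a packaging result. The only mild subtlety, which I would address explicitly rather than gloss over, is checking that the mode-uniform correctness argument for the Boolean-combination construction interacts correctly with the ``junk threads'' phenomenon noted in Remark~\ref{r:junk}: since extra threads only make acceptance harder and the construction does not suppress any required threads, the inclusion-based transition relation still admits the intended combined runs, and no spurious acceptance is introduced.
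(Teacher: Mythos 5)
Your proposal is correct and follows essentially the same route as the paper, which likewise derives the proposition directly from the dual construction $\overline{\aregaut}$ (with the cited complementation fact) and the disjoint-union-with-fresh-initial-state construction for intersection and union, noting that all steps are syntactic and hence logarithmic-space. The extra care you take about mode-uniformity and junk threads is sound but not needed beyond what the paper's preceding discussion already establishes.
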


\subsubsection*{Safety Languages}

A set $L$ of data trees with alphabet $\aalphabet$
is called \emph{safety} \cite{Alpern&Schneider87} iff
it is closed with respect to the metric defined in Section~\ref{ss:trees},
i.e.\ for each data tree $\adatatree$,
if for all $l > 0$ there exists $\adatatree'_l \in L$
such that the $l$-prefixes of $\adatatree$ and $\adatatree'_l$ are equal,
then $\adatatree \in L$.
The complements of safety languages, i.e.\ the open sets of data trees,
are called \emph{co-safety}.

\begin{proposition}
\label{pr:safety.RA}
For each ATRA$_1$ $\aregaut$, we have that
$\mathrm{L}^{\mathit saf}(\aregaut)$ is safety and
$\mathrm{L}^{\mathit cos}(\aregaut)$ is co-safety.
\end{proposition}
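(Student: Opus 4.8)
The plan is to prove one of the two claims directly and obtain the other for free from the complementation results recorded in the paragraph on Boolean operations. Recall that for every ATRA$_1$ the complement of $\mathrm{L}^{\mathit saf}(\aregaut)$ is $\mathrm{L}^{\mathit cos}(\overline{\aregaut})$, and that co-safety languages are by definition the complements of the safety (closed) ones, namely the open sets. Hence it suffices to show that $\mathrm{L}^{\mathit cos}(\aregaut)$ is open for every $\aregaut$: applying this to $\overline{\aregaut}$ shows that the complement of $\mathrm{L}^{\mathit saf}(\aregaut)$ is open, i.e.\ that $\mathrm{L}^{\mathit saf}(\aregaut)$ is closed, which is exactly safety. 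So I would concentrate on proving that $\mathrm{L}^{\mathit cos}(\aregaut)$ is open.

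First I would unwind the metric: the ball of radius $1/l$ around a data tree $\adatatree$ consists precisely of those data trees whose $l$-prefix equals that of $\adatatree$. Thus it is enough to show that for every $\adatatree \in \mathrm{L}^{\mathit cos}(\aregaut)$ there is an $l$ such that every $\adatatree'$ with the same $l$-prefix also lies in $\mathrm{L}^{\mathit cos}(\aregaut)$, since that exhibits $\adatatree$ as an interior point.

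Given such a $\adatatree$, take a final finite run $\anode \mapsto \aconf_\anode$ witnessing co-safety acceptance, and let $l$ be a finiteness witness, so that $\aconf_\anode = \emptyset$ whenever $\length{\anode} \geq l$. For any $\adatatree'$ agreeing with $\adatatree$ on the $l$-prefix, I would define a run on $\adatatree'$ by reusing the configurations $\aconf_\anode$ on all nodes of length less than $l$ and assigning the empty configuration to every node of length at least $l$. Checking that this is again a final finite run is the only place that needs care, and it is where I would be most careful: on nodes of length less than $l$ the labels, data and child structure of $\adatatree'$ coincide with those of $\adatatree$, so the initial-thread condition and every instance of the transition relation carry over verbatim; on every node of length at least $l$ the configuration is empty, so the transition relation holds vacuously regardless of how $\adatatree'$ branches below depth $l$, and every leaf of length at least $l$ trivially contains only final states. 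The one subtle point is that $\adatatree$ and $\adatatree'$ may differ in whether a node of length exactly $l$ is a leaf; this causes no problem, because such a node already carries the empty configuration. Finiteness of the new run is immediate, so $\mathrm{L}^{\mathit cos}(\aregaut)$ is open.

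I expect no serious obstacle here: the content of the argument is simply that a finite run is determined by, and transplantable along, a bounded prefix. For completeness I would remark that $\mathrm{L}^{\mathit saf}(\aregaut)$ can also be shown closed directly, by a K\"onig's lemma argument that assembles a final run on a limit tree $\adatatree$ out of final runs on trees approximating it. The genuinely delicate point on that alternative route --- and the reason I prefer the duality argument --- is ensuring finite branching of the K\"onig tree: one must first observe, using Remark~\ref{r:junk} to restrict attention to junk-free runs, that every register value occurring at a node is the datum of one of its finitely many proper ancestors, so that only finitely many configurations are possible at each node of $\adatatree$ and the finiteness needed for K\"onig's lemma holds.
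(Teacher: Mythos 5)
Your proof is correct, but it attacks the dual half of the statement from the one the paper proves directly. The paper also begins by invoking the duality $\mathrm{L}^{\mathit cos}(\overline{\aregaut}) = $ complement of $\mathrm{L}^{\mathit saf}(\aregaut)$ to reduce to a single claim, but it reduces to showing that $\mathrm{L}^{\mathit saf}(\aregaut)$ is \emph{closed}, and proves that by the K\"onig's Lemma argument you sketch only as an aside: fix final runs on the approximants $\adatatree'_l$, slice them into levels, normalise so that every register value occurring at a node is the datum of some ancestor (this is exactly the finite-branching issue you flag), and extract a limit run on $\adatatree$ along an infinite path of the resulting finitely branching tree. You instead prove that $\mathrm{L}^{\mathit cos}(\aregaut)$ is \emph{open} by transplanting a final finite run along its depth-$l$ witness prefix, and then apply this to $\overline{\aregaut}$. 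Your route is the more elementary of the two: finiteness of the run makes openness essentially immediate and local, with no compactness argument and no normalisation of runs, and your handling of the boundary cases (nodes of length exactly $l$, possible disagreement about leafness below the prefix, the vacuous transition from an empty configuration) is exactly what is needed. What the paper's direction buys in exchange is an explicit limit construction for safety runs, which is the shape of argument that survives in settings where one cannot appeal to the dual automaton; but for this proposition the two are interchangeable, since both consume the same L\"oding--Thomas duality for the remaining half.
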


\begin{proof}
By Proposition~\ref{pr:closure.RA}(b), it suffices to show that
$\mathrm{L}^{\mathit saf}(\aregaut)$ is safety.
Suppose for all $l > 0$
there exists $\adatatree'_l \in \mathrm{L}^{\mathit saf}(\aregaut)$
such that the $l$-prefixes of $\adatatree$ and $\adatatree'_l$ are equal.

For each $l > 0$, let us fix a final run
$\anode \mapsto \aconf'_{l, \anode}$
of $\aregaut$ on $\adatatree'_l$.
For each $0 \leq k \leq l$, let $\aconfs_{l, k}$ denote
the restriction of the run $\anode \mapsto \aconf'_{l, \anode}$
to nodes $\anode$ of length $k$.

Consider the tree consisting of the empty sequence and all sequences
$\aconfs_{l, 0} \cdot \aconfs_{l, 1} \cdot \cdots \aconfs_{l, k}$
for $l > 0$ and $0 \leq k \leq l$.
Without loss of generality, each register value in each $\aconf'_{l, \anode}$
labels some node of $\adatatree'_l$ on the path from the root to $\anode$,
so the tree is finitely branching.
By K\"onig's Lemma, it has an infinite path
$\aconfsbis_0 \cdot \aconfsbis_1 \cdot \cdots$.
For each $0 \leq k$, $\aconfsbis_k$ is a mapping from
the nodes of $\adatatree$ of length $k$ to configurations of $\aregaut$.
It remains to observe that
$\anode \mapsto \aconfsbis_{\length{\anode}}(\anode)$
is a final run of $\aregaut$ on $\adatatree$.
\end{proof}

\begin{example}
\label{ex:ATRA}
By recursion on $k \geq 1$, we shall define ATRA$_1$ $\aregautbis_k$
with alphabet $\{\aletterbis_1, \ldots, \aletterbis_k, *\}$.
As well as being interesting examples of ATRA$_1$,
the $\aregautbis_k$ will be used in the nonelementarity part of
the proof of Theorem~\ref{th:safety}.

Let $\aregautbis_1$ be the automaton depicted in Figure~\ref{f:B1}.
It has three states, where $\aloc$ is initial, and $\aloc''$ is final.
We have
$\delta(\aloc, \aletterbis_1, p) =
 \aloc'(0, \ndownarrow) \wedge
 \aloc''(1, \ndownarrow)$
and
$\delta(\aloc', \aletterbis_1, p) =
 \aloc''(0, \ndownarrow) \wedge
 \aloc''(1, \ndownarrow)$
for both $p \in \{\mathit{tt}, \mathit{ff}\}$,
and the transition function gives $\bot$ in all other cases.
(Recalling that the initial thread's register value is the root node's datum,
the formula $\delta(\aloc, \aletterbis_1, \mathit{ff})$ is in fact irrelevant.)
Observe that $\aregautbis_1$ safety-accepts exactly data trees
that have two nonleaf nodes, the root and its left-hand child,
and both are labelled by letter $\aletterbis_1$.

\begin{narrowfig}{.67\textwidth}
\setlength{\unitlength}{2em}
\begin{picture}(14,3.5)(-.5,-.5)
\gasset{Nadjust=wh}
\node[Nmarks=i,iangle=180,ilength=1](1)(1,1){$\aloc$}
\node[Nadjustdist=0](11)(4,1){}
\node(2)(7,2.5){$\aloc'$}
\node[Nadjustdist=0](21)(10,2.5){}
\node[Nmarks=r](3)(13,1){$\aloc''$}
\drawedge[AHnb=0](1,11){$\aletterbis_1$}
\drawedge(11,2){$0, \ndownarrow$}
\drawedge[AHnb=0](2,21){$\aletterbis_1$}
\drawedge[curvedepth=.5](21,3){$0, \ndownarrow$}
\drawedge[curvedepth=-.5,ELside=r,ELpos=25](21,3){$1, \ndownarrow$}
\drawedge[curvedepth=-1,ELside=r](11,3){$1, \ndownarrow$}
\end{picture}
\caption{Defining $\aregautbis_1$}
\label{f:B1}
\end{narrowfig}

For each $k \geq 1$, $\aregautbis_{k + 1}$ is defined so that it safety-accepts
a data tree over $\{\aletterbis_1, \ldots, \aletterbis_{k + 1}, *\}$ iff:
\begin{itemize}
\item[(i)]
the root node is labelled by $\aletterbis_{k + 1}$,
its left-hand child is labelled by $\aletterbis_{k + 1}$,
and its right-hand child is a leaf;
\item[(ii)]
for each node $\anode$ labelled by $\aletterbis_{k + 1}$, which is not the root,
the left-hand child of $\anode$ is labelled by $*$
and its both children are labelled by $\aletterbis_{k + 1}$,
and the right-hand subtree at $\anode$ is safety-accepted by $\aregautbis_k$;
\item[(iii)]
whenever a node $\anode$, which is not the root,
and a descendant $\anode'$ of $\anode$
are labelled by $\aletterbis_{k + 1}$,
we have that their data labels are distinct,
and that the datum at $\anode$ equals the datum at some node
which is labelled by $\aletterbis_k$ and
which is in the right-hand subtree at $\anode'$.
\end{itemize}
By Proposition~\ref{pr:closure.RA}(b),
it suffices to define automata for (i)--(iii) separately.
Expressing (i) and (ii) is straightforward,
and an automaton for (iii) is depicted in Figure~\ref{f:Bkp1}.
It has four states, where $\aloc_0$ is initial,
and $\aloc_1$ and $\aloc_2$ are final.
For all letters $\aletter$ and Booleans $p$, we have
$\delta(\aloc_0, \aletter, p) =
 \aloc_1(0, \ndownarrow)$,
so initially the automaton moves to the left-hand child of the root
and changes the state to $\aloc_1$.
From $\aloc_1$, if the current node is labelled by $*$,
the automaton moves to both children:
$\delta(\aloc_1, *, p) =
 \aloc_1(0, \ndownarrow) \wedge \aloc_1(1, \ndownarrow)$
for both $p$.
Also from $\aloc_1$, if the current node $\anode$
is labelled by $\aletterbis_{k + 1}$,
the automaton both moves to the left-hand child without changing the state,
and moves to the left-hand child with
storing the datum at $\anode$ in the register and
changing the state to $\aloc_2$:
$\delta(\aloc_1, \aletterbis_{k + 1}, p) =
 \aloc_1(0, \ndownarrow) \wedge \aloc_2(0, \downarrow)$
for both $p$.
From $\aloc_2$, the behaviour for $*$ is analogous to that from $\aloc_1$,
but if the current node's letter is $\aletterbis_{k + 1}$
and its datum is distinct from the datum in the register,
the automaton both moves to the left-hand child without changing the state
and moves to the right-hand child with changing the state to $\aloc_3$:
$\delta(\aloc_2, \aletterbis_{k + 1}, \mathit{ff}) =
 \aloc_2(0, \ndownarrow) \wedge \aloc_3(1, \ndownarrow)$.
The remainder of Figure~\ref{f:Bkp1} is interpreted similarly,
and in cases not depicted, the transition function gives $\bot$.
Since the mode of acceptance is safety, the automaton in fact expresses:
\begin{itemize}
\item[(iii')]
whenever a node $\anode$, which is not the root,
and a descendant $\anode'$ of $\anode$
are labelled by $\aletterbis_{k + 1}$,
we have that their data labels are distinct,
and that either the datum at $\anode$ equals the datum at some node
which is labelled by $\aletterbis_k$ and
which is in the right-hand subtree at $\anode'$,
or that subtree is infinite.
\end{itemize}

\begin{figure}
\setlength{\unitlength}{2em}
\begin{center}
\begin{picture}(20,6)(-.5,1)
\gasset{Nadjust=wh}
\node[Nmarks=i,iangle=180,ilength=1](0)(1,4){$\aloc_0$}
\node[Nmarks=r](1)(4,4){$\aloc_1$}
\drawedge[ELside=r](0,1){$0, \ndownarrow$}
\node[Nadjustdist=0](1a)(4,7){}
\drawedge[AHnb=0,ELside=r](1,1a){$*$}
\drawedge[curvedepth=-1,ELside=r](1a,1){$0, \ndownarrow$}
\drawedge[curvedepth=1](1a,1){$1, \ndownarrow$}
\node[Nadjustdist=0](1b)(7,4){}
\node[Nmarks=r](2)(10,4){$\aloc_2$}
\drawedge[AHnb=0,ELside=r](1,1b){$\aletterbis_{k + 1}$}
\drawedge[curvedepth=2](1b,1){$0, \ndownarrow$}
\drawedge[ELside=r](1b,2){$0, \downarrow$}
\node[Nadjustdist=0](2a)(10,7){}
\drawedge[AHnb=0,ELside=r](2,2a){$*$}
\drawedge[curvedepth=-1,ELside=r](2a,2){$0, \ndownarrow$}
\drawedge[curvedepth=1](2a,2){$1, \ndownarrow$}
\node[Nadjustdist=0](2b)(13,4){}
\node(3)(16,4){$\aloc_3$}
\drawedge[AHnb=0,ELside=r](2,2b){$\aletterbis_{k + 1}, \neq$}
\drawedge[curvedepth=2](2b,2){$0, \ndownarrow$}
\drawedge[ELside=r](2b,3){$1, \ndownarrow$}
\node[Nadjustdist=0](3a)(13,7){}
\drawedge[AHnb=0,curvedepth=.5](3,3a){$*$}
\drawedge[curvedepth=.5](3a,3){$0, \ndownarrow$}
\node[Nadjustdist=0](3b)(19,7){}
\drawedge[AHnb=0,curvedepth=.5](3,3b){$*$}
\drawedge[curvedepth=.5](3b,3){$1, \ndownarrow$}
\node[Nframe=n](3c)(19,4){$\top$}
\drawedge[ELside=r](3,3c){$\aletterbis_k, =$}
\node[Nadjustdist=0](3d)(16,1){}
\drawedge[AHnb=0,curvedepth=-.5,ELside=r](3,3d){$\aletterbis_k, \neq$}
\drawedge[curvedepth=-.5,ELside=r](3d,3){$0, \ndownarrow$}
\end{picture}
\end{center}
\caption{Defining $\aregautbis_{k + 1}$}
\label{f:Bkp1}
\end{figure}

Let $2 \Uparrow 0 = 1$,
and $2 \Uparrow k = 2^{2 \Uparrow (k - 1)}$ for $k \geq 1$.
By induction on $k \geq 1$, the safety language of $\aregautbis_k$
has the following two properties.
In particular, in the presence of (i) and (ii),
we have that (iii) and (iii') are equivalent.
\begin{itemize}
\item
for every $\adatatree$ safety-accepted by $\aregautbis_k$,
every downward sequence which is from the left-hand child of the root
and which consists of nodes labelled by $\aletterbis_k$
is of length at most $2 \Uparrow (k - 1)$,
so $\adatatree$ is finite and has at most
$2 \Uparrow k$ nodes labelled by $\aletterbis_k$;
\item
for some $\adatatree$ safety-accepted by $\aregautbis_k$,
the nodes labelled by $\aletterbis_k$ other than the root
form a full binary tree of height $2 \Uparrow (k - 1)$
(after removing the nodes labelled by $*$),
so there are $2 \Uparrow k$ nodes labelled by $\aletterbis_k$,
and moreover the data at those nodes are mutually distinct.
\end{itemize}

Finally, we observe that for computing $\aregautbis_k$,
space logarithmic in $k$ suffices.
\end{example}

\subsection{Faulty Tree Counter Automata}
\label{ss:ITCA}

In Section~\ref{s:dec.fin}, we shall establish decidability of nonemptiness of
forward alternating tree $1$-register automata over finite data trees,
by translating them to automata which have natural-valued counters
with increments, decrements and zero-tests.
The translation will eliminate conjunctive branchings, by having
configurations of the former automata (which are finite sets of threads)
correspond to pairs of states and counter valuations,
so the latter automata will be only nondeterministic.
Also, data will be abstracted in the translation,
so the counter automata will run on finite trees (without data).

The feature that will make nonemptiness of the counter automata decidable
(on finite trees) is that they will be faulty, in the sense that
one or more counters can erroneously increase at any time.
The key insight is that such faults do not affect the translation's
preservation of nonemptiness: they in fact correspond to introductions
of ``junk'' threads in runs of ATRA$_1$ (cf.\ Remark~\ref{r:junk}).

For clarity of the correspondence between the finitary languages of ATRA$_1$
and the languages of their translations,
the counter automata will have $\emptyword$-transitions.

We now define the counter automata,
and show their nonemptiness decidable.

\subsubsection*{Automata}

An \emph{incrementing tree counter automaton} (shortly, \emph{ITCA}) $\acaut$,
which is forward and with $\emptyword$-transitions,
is a tuple $\tuple{\aalphabet, \locs, \aloc_I, F, k, \delta}$ such that:
\begin{itemize}
\item
$\aalphabet$ is a finite alphabet and
$\locs$ is a finite set of states;
\item
$\aloc_I \in \locs$ is the initial state and
$F \subseteq \locs$ are the final states;
\item
$k \in \mathbb{N}$ is the number of counters;
\item
$\delta \,\subseteq\,
 (\locs \times \aalphabet \times L \times \locs \times \locs) \cup
 (\locs \times \{\emptyword\} \times L \times \locs)$
is a transition relation, where
$L = \{\mathtt{inc, dec, ifz}\} \times \{1, \ldots, k\}$
is the instruction set.
\end{itemize}

\subsubsection*{Runs and Languages}

A counter valuation is a mapping from $\{1, \ldots, k\}$ to $\mathbb{N}$.
For counter valuations $\acval$ and $\acval'$, we write:
\begin{center}
\begin{tabular}{rcl}
$\acval \leq \acval'$
& iff &
$\acval(c) \leq \acval'(c)$ for all $c$
\\
$\acval \xrightarrow{\tuple{\mathtt{inc}, c}}_\surd \acval'$
& iff &
$\acval' = \acval[c \mapsto \acval(c) + 1]$
\\
$\acval \xrightarrow{\tuple{\mathtt{dec}, c}}_\surd \acval'$
& iff &
$\acval' = \acval[c \mapsto \acval(c) - 1]$
\\
$\acval \xrightarrow{\tuple{\mathtt{ifz}, c}}_\surd \acval'$
& iff &
$\acval(c) = 0$ and $\acval' = \acval$
\\
$\acval \stackrel{l}{\rightarrow} \acval'$
& iff &
$\acval \leq \acval_\surd
 \stackrel{l}{\rightarrow}_\surd
 \acval'_\surd \leq \acval'$
for some $\acval_\surd$, $\acval'_\surd$
\end{tabular}
\end{center}

A \emph{configuration} of $\acaut$ is a pair $\tuple{\aloc, \acval}$,
where $\aloc$ is a state and $\acval$ is a counter valuation.

To define runs, we first specify that
a \emph{block} is a nonempty finite sequence of configurations
obtainable by performing $\emptyword$-transitions,
i.e.\ for every two adjacent configurations
$\tuple{\aloc_i, \acval_i}$ and $\tuple{\aloc_{i + 1}, \acval_{i + 1}}$
in a block, there exists $l$ with
$\tuple{\aloc_i, \emptyword, l, \aloc_{i + 1}} \in \delta$ and
$\acval_i \stackrel{l}{\rightarrow} \acval_{i + 1}$.

Now, a run of $\acaut$ on a finite tree $\tuple{\nodes, \aalphabet, \Lambda}$
is a mapping $\anode \mapsto \ablock_\anode$ from the nodes to blocks such that:
\begin{itemize}
\item
$\tuple{\aloc_I, \mathbf{0}}$ is the first configuration in
$\ablock_\emptyword$;
\item
for each nonleaf $\anode$, there exists $l$ with
$\tuple{\aloc, \Lambda(\anode), l, \alocbis_0, \alocbis_1} \in \delta$,
$\acval \stackrel{l}{\rightarrow} \acvalbis_0$ and
$\acval \stackrel{l}{\rightarrow} \acvalbis_1$,
where $\tuple{\aloc, \acval}$ is the last configuration in $\ablock_\anode$,
and $\tuple{\alocbis_0, \acvalbis_0}$ and $\tuple{\alocbis_1, \acvalbis_1}$
are the first configurations in
$\ablock_{\anode \cdot 0}$ and $\ablock_{\anode \cdot 1}$ (respectively).
\end{itemize}

We regard such a run accepting iff, for each leaf $\anode$,
the state of the last configuration in $\ablock_\anode$ is final.
The language $\mathrm{L}(\acaut)$ is the set of all finite trees
with alphabet $\aalphabet$ on which $\acaut$ has an accepting run.

\subsubsection*{Decidability of Nonemptiness}

We remark that, since nonemptiness of incrementing counter automata over words
is not primitive recursive \cite[Theorem~2.9(b)]{Demri&Lazic09},
the same is true of nonemptiness of ITCA.

\begin{theorem}
\label{th:ITCA}
Nonemptiness of ITCA is decidable.
\end{theorem}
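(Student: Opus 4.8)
The plan is to reduce nonemptiness of ITCA to a well-structured transition system and invoke the theory of well-quasi-orderings. The key observation is that incrementing errors make the transition relation on configurations \emph{downwards compatible} with a natural well-quasi-ordering, so that backward reachability terminates.

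First I would fix the ordering on configurations. For configurations $\tuple{\aloc, \acval}$ and $\tuple{\alocbis, \acvalbis}$, set $\tuple{\aloc, \acval} \preceq \tuple{\alocbis, \acvalbis}$ iff $\aloc = \alocbis$ and $\acval \leq \acvalbis$ componentwise. Since $\locs$ is finite and $\mathbb{N}^k$ with the product order is a well-quasi-ordering by Dickson's Lemma, $\preceq$ is a well-quasi-ordering on the set of configurations. The crucial monotonicity property is that, because a counter may spontaneously increase at any time, the relation $\acval \stackrel{l}{\rightarrow} \acval'$ is upward-closed in its right argument and, more importantly for backward analysis, the one-step transition relation is compatible with $\preceq$ in the sense that larger source configurations can simulate transitions from smaller ones while producing larger targets. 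The definition $\acval \leq \acval_\surd \stackrel{l}{\rightarrow}_\surd \acval'_\surd \leq \acval'$ is engineered precisely to deliver this: even a zero-test $\tuple{\mathtt{ifz}, c}$ remains compatible because the error applies only to the post-valuation, leaving the tested valuation free to be exactly zero.

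Next I would recast nonemptiness as a bottom-up reachability question on trees. Because runs are finite trees and acceptance is a condition at the leaves (last-configuration state in $F$), I would compute the set of configurations from which an accepting subtree \emph{can} be grown, working upward. Define $\mathit{Acc}$ to be the least set of configurations $\tuple{\aloc, \acval}$ such that either $\aloc \in F$ (a single leaf block suffices), or there exist a branching transition $\tuple{\aloc, \aletter, l, \alocbis_0, \alocbis_1} \in \delta$ and target valuations $\acvalbis_0, \acvalbis_1$ with $\acval \stackrel{l}{\rightarrow} \acvalbis_d$ and $\tuple{\alocbis_d, \acvalbis_d} \in \mathit{Acc}$ for both $d$, with the set additionally closed under intervening $\emptyword$-blocks. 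The automaton is nonempty iff some configuration reachable from $\tuple{\aloc_I, \mathbf{0}}$ lies in $\mathit{Acc}$; by compatibility it suffices to check whether $\tuple{\aloc_I, \mathbf{0}}$ itself, or a $\preceq$-larger configuration, is generated, since increments can always supply extra counter mass. Here I would argue that $\mathit{Acc}$ is upward-closed under $\preceq$: if an accepting subtree grows from $\tuple{\aloc, \acval}$, then from any $\preceq$-larger configuration one grows an accepting subtree by absorbing the surplus as incrementing errors. Consequently $\mathit{Acc}$ is determined by its finitely many minimal elements.

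The main obstacle, and the heart of the termination argument, is computing these minimal elements effectively. I would organise this as a saturation: start from the upward closure of $\{\tuple{\aloc, \acval} : \aloc \in F\}$ and repeatedly add predecessors under the branching and $\emptyword$-transitions, each time recording only the $\preceq$-minimal configurations. Two facts make this terminate and remain computable. First, by the well-quasi-ordering, any $\preceq$-antichain of configurations is finite, so the sequence of minimal-element sets stabilises after finitely many rounds. Second, each predecessor step is effective: given an upward-closed target described by finitely many minimal configurations, the set of source valuations $\acval$ admitting $\acval \stackrel{l}{\rightarrow} \acvalbis$ into the target, for a fixed instruction $l$, is again upward-closed with computable minimal elements (increments and decrements shift coordinates by one; a zero-test fixes the tested coordinate to zero in the minimal witnesses), and the two-child conjunction is handled by componentwise maxima. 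I expect the delicate point to be verifying that the error-tolerant relation $\stackrel{l}{\rightarrow}$ yields upward-closed, computably-presented predecessor sets uniformly across all three instruction types, and that the $\emptyword$-block closure does not break upward-closedness; once that is in place, stabilisation of the minimal-element sets gives a terminating algorithm, and membership of $\tuple{\aloc_I, \mathbf{0}}$ in the computed upward-closed set decides nonemptiness.
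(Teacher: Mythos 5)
There is a genuine gap, and it lies exactly at the point you flag as delicate. Your argument rests on the claim that the transition relation is \emph{upward} compatible with $\preceq$ (``larger source configurations can simulate transitions from smaller ones'') and hence that the set $\mathit{Acc}$ of configurations admitting an accepting subtree is upward-closed. Both claims fail because of zero tests. In the relation $\acval \leq \acval_\surd \stackrel{l}{\rightarrow}_\surd \acval'_\surd \leq \acval'$, the errors on the \emph{pre}-side also go upward, so a source valuation $\acval$ can pass $\tuple{\mathtt{ifz}, c}$ only if $\acval(c) = 0$; a $\preceq$-larger configuration with $\acvalbis(c) > 0$ can never pass that test, since incrementing errors only add counter mass and cannot ``absorb'' a surplus. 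Consequently $\mathit{Acc}$ is \emph{downward}-closed (a smaller configuration can mimic every step of a larger one, ending $\leq$ it at every node, and acceptance depends only on leaf states), not upward-closed, so it is not determined by finitely many minimal elements. Concretely, the existential predecessor of a nonempty upward-closed set under $\tuple{\mathtt{ifz}, c}$ is $\{\acval : \acval(c) = 0\}$, which is not upward-closed, so your saturation manipulates sets that are not of the form you assume and its correctness and termination arguments do not go through. The backward, minimal-basis algorithm you describe is the standard one for \emph{upward}-compatible systems (e.g.\ lossy counters, where errors decrement); incrementing errors dualise the compatibility.

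The paper instead exploits the downward compatibility in a \emph{forward} exploration: it defines ``lazy'' transitions (no errors, except that a decrement of $0$ yields $0$) so that every level --- a finite set of configurations representing a horizontal cut of the run tree --- has a finite, computable set of successors; it orders levels by a Higman-style embedding $\preceq$, which is a well-quasi-ordering and is strongly \emph{downward}-compatible with the level transition relation; and it reduces nonemptiness to reaching the empty level from the initial one, which is an instance of the subcovering problem for downward well-structured transition systems, decidable by Finkel and Schnoebelen. If you want to keep a backward flavour, the viable variant (used elsewhere in the paper, in the proof of Theorem~\ref{th:safety}) is to compute the \emph{complement} of the downward-closed good set via universal predecessors $\mathrm{Pred}_\forall$ of upward-closed sets; but the existential backward saturation as you set it up does not work.
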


\begin{proof}
Consider an ITCA
$\acaut = \tuple{\aalphabet, \locs, \aloc_I, F, k, \delta}$.

For counter valuations $\acval$ and $\acval'$, and an instruction $l$,
we say that $\acval$ under $l$ yields $\acval'$ \emph{lazily}
and write $\acval \stackrel{l}{\rightarrow}_\flat \acval'$ iff
either $\acval \stackrel{l}{\rightarrow}_\surd \acval'$
(i.e., there are no incrementing errors),
or $l$ is of the form $\tuple{\mathtt{dec}, c}$,
$\acval(c) = 0$ and $\acval' = \acval$
(i.e., $0$ is erroneously decremented to $0$).
Observe that:
\begin{describe}{(*)}
\item[(*)]
Whenever $\acval \leq \acvalbis$
and $\acvalbis \stackrel{l}{\rightarrow} \acvalbis'$,
there exists $\acval'$
such that $\acval \stackrel{l}{\rightarrow}_\flat \acval'$
and $\acval' \leq \acvalbis'$.
\end{describe}

To reduce the nonemptiness problem for $\acaut$ to a reachability problem,
let a \emph{level} of $\acaut$ be a finite set of configurations.
For levels $\aconfs$ and $\aconfs'$ of $\acaut$,
let us write $\aconfs \rightarrow \aconfs'$ iff
$\aconfs'$ can be obtained from $\aconfs$ as follows:
\begin{itemize}
\item
each $\tuple{\aloc, \acval} \in \aconfs$ with $\aloc \notin F$ is replaced
either by the two configurations that some firable transition
$\tuple{\aloc, \aletter, l, \alocbis_0, \alocbis_1}$ yields lazily,
or by the one configuration that some firable transition
$\tuple{\aloc, \emptyword, l, \alocbis}$ yields lazily;
\item
each $\tuple{\aloc, \acval} \in \aconfs$ with $\aloc \in F$
is removed.
\end{itemize}
Performing transitions of $\acaut$ lazily ensures that,
for every level $\aconfs$, the set
$\{\aconfs' \,:\, \aconfs \rightarrow \aconfs'\}$
of all its successors is finite.
The latter set is also computable.
By the definition of accepting runs and (*),
we have that $\acaut$ is nonempty iff
the empty level is reachable from
the initial level $\{\tuple{\aloc_I, \mathbf{0}}\}$.

For configurations $\tuple{\aloc, \acval}$ and $\tuple{\alocbis, \acvalbis}$,
let $\tuple{\aloc, \acval} \leq \tuple{\alocbis, \acvalbis}$ iff
$\aloc = \alocbis$ and $\acval \leq \acvalbis$.
Now, let $\preceq$ be the quasi-ordering obtained by
lifting $\leq$ to levels:
$\aconfs \preceq \aconfsbis$ iff,
for each $\tuple{\aloc, \acval} \in \aconfs$,
there exists $\tuple{\alocbis, \acvalbis} \in \aconfsbis$
such that $\tuple{\aloc, \acval} \leq \tuple{\alocbis, \acvalbis}$.
By Higman's Lemma \cite{Higman52}, $\preceq$ is a well-quasi-ordering,
i.e., for every infinite sequence $\aconfs_0, \aconfs_1, \ldots$,
there exist $i < j$ such that $\aconfs_i \preceq \aconfs_j$.
Observe that, in the terminology of \citeN{Finkel&Schnoebelen01},
$\preceq$ is strongly downward-compatible with $\rightarrow$:
whenever $\aconfs \preceq \aconfsbis$
and $\aconfsbis \rightarrow \aconfsbis'$,
there exists $\aconfs'$
such that $\aconfs \rightarrow \aconfs'$
and $\aconfs' \preceq \aconfsbis'$.
Also, $\preceq$ is decidable.

Since $\aconfs \preceq \emptyset$ iff $\aconfs = \emptyset$,
we have reduced nonemptiness of $\acaut$ to
the subcovering problem for downward well-structured transition systems
with reflexive (which is weaker than strong) compatibility,
computable successor sets and decidable ordering.
The latter is decidable by \cite[Theorem~5.5]{Finkel&Schnoebelen01}.
\end{proof}

\section{Decidability Over Finite Data Trees}
\label{s:dec.fin}

\begin{theorem}
\label{th:ATRA1}
Nonemptiness of ATRA$_1$ over finite data trees is
decidable and not primitive recursive.
\end{theorem}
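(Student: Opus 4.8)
The statement has two halves, and I would treat them separately. For decidability, the plan is to translate each ATRA$_1$ $\aregaut$ into an ITCA $\acaut$ so that $\mathrm{L}(\acaut) \neq \emptyset$ iff $\mathrm{L}^{\mathit fin}(\aregaut) \neq \emptyset$, and then invoke Theorem~\ref{th:ITCA}. The translation rests on a counting abstraction of configurations. A configuration $\aconf$ is a finite set of threads $\tuple{\aloc, \adatum}$, but since data can only be compared for equality, the only thing a transition formula $\delta(\aloc, \aletter, \adatum = \adatumbis)$ can observe about $\aconf$, relative to the datum $\adatumbis$ read at the current node, is: the set $S \subseteq \locs$ of states of threads whose register value equals $\adatumbis$, and, for every nonempty $T \subseteq \locs$, the number of distinct data values $\neq \adatumbis$ that carry exactly the set of states $T$. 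I would therefore give $\acaut$ the counters $(c_T)_T$ indexed by nonempty $T \subseteq \locs$ (so $2^{\length{\locs}} - 1$ counters) and control states encoding the subset $S$. Because a configuration may carry unboundedly many data classes, I would use the $\emptyword$-transitions within a block to process the classes one at a time: fire the $S$-class with Boolean $\mathit{tt}$ and each counter-class by decrementing $c_T$ and firing its states with $\mathit{ff}$, in each case choosing a satisfying quadruple $\locsbis_0^\downarrow, \locsbis_0^{\ndownarrow}, \locsbis_1^{\downarrow}, \locsbis_1^{\ndownarrow}$ and routing the produced threads to the two children by incrementing target counters. The branching transition then launches the two child-abstractions at $\anode \cdot 0$ and $\anode \cdot 1$, with $\acaut$ also guessing how each child's datum relates to $\adatumbis$ and to the recorded classes (equal to it, equal to one old class, or fresh) so as to recompute $S$ there.

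Correctness splits into soundness and completeness. For soundness, every final run of $\aregaut$ on a finite data tree abstracts node by node to a run of $\acaut$ using no incrementing errors: the counters record the true numbers of data classes, and every guess about data coincidences matches the actual data. For completeness, I would take an accepting run of $\acaut$, which in general uses incrementing errors, and concretise it into a final run of $\aregaut$: counter values are realised by assigning distinct data to the corresponding classes, and each incrementing error simply introduces a spurious class, i.e.\ a bundle of ``junk'' threads carrying a fresh datum. By Remark~\ref{r:junk} such junk is always permitted in a run, and, since acceptance is an upper-bound condition, the concretised run is still final. Property~(*) from the proof of Theorem~\ref{th:ITCA} is exactly what guarantees that the lazy, error-using transitions of $\acaut$ can always be matched below by legitimate thread behaviour. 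Combining both directions gives $\mathrm{L}(\acaut) \neq \emptyset$ iff $\mathrm{L}^{\mathit fin}(\aregaut) \neq \emptyset$, so decidability follows from Theorem~\ref{th:ITCA}.

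The main obstacle is getting this correspondence exactly right. The delicate bookkeeping is how the two children inherit threads under $\downarrow$ (register reset to the current datum) versus $\ndownarrow$ (register preserved), together with the re-abstraction of each child relative to its own datum, so that a single block of $\emptyword$-transitions followed by one branching transition produces precisely the two child-abstractions. I expect the genuinely subtle point to be verifying that incrementing errors correspond to \emph{harmless} junk rather than to illegitimate shortcuts: this is where the lower-bound reading of configurations from Remark~\ref{r:junk} is essential, since it is what makes the error-using ITCA have the same nonemptiness status as the error-free simulation.

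For the lower bound, I would reduce from nonemptiness of one-way alternating $1$-register automata over finite data words, which is not primitive recursive \cite{Demri&Lazic09}. A finite data word is encoded as a ``spine'' data tree in which the letter and datum of the $i$-th position label the $i$-th node along the left-hand branch, every right-hand child being a dummy leaf; this respects the forward, downward direction of ATRA$_1$. Given a word automaton, I would build an ATRA$_1$ that first checks, by purely downward tests, that its input is such a spine encoding, and then simulates the word automaton along the spine, each forward move of the word automaton becoming a move to the left-hand child. This reduction is logarithmic-space computable, and $\mathrm{L}^{\mathit fin}$ of the resulting ATRA$_1$ is nonempty iff the word automaton is nonempty; hence ATRA$_1$ nonemptiness inherits the non-primitive-recursive lower bound, completing the statement.
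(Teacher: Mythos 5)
Your proposal matches the paper's proof in all essentials: the same counting abstraction (one counter per nonempty $T \subseteq \locs$ recording how many data classes have bundle $T$, with the current node's datum class distinguished), the same use of a block of $\emptyword$-transitions to process the classes one at a time and route $\downarrow$/$\ndownarrow$ successors to the two children, the same identification of incrementing errors with harmless junk threads via Remark~\ref{r:junk}, and the same spine encoding of data words for the non-primitive-recursive lower bound. The only differences are implementation details (e.g.\ you keep the current datum's bundle in the control state where the paper keeps it as a designated counter class, and the paper additionally uses an auxiliary counter array indexed by quadruples to buffer the routing), so this is essentially the paper's argument.
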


\begin{proof}
By considering data words as data trees
(e.g., by using only left-hand children starting from the root),
the lower bound follows from non-primitive recursiveness of
nonemptiness of one-way co-nondeterministic
(i.e., with only conjunctive branching)
automata with $1$ register over finite data words
\cite[Theorem~5.2]{Demri&Lazic09}.

We shall establish decidability by reducing to nonemptiness of ITCA,
which is decidable by Theorem~\ref{th:ITCA}.
More specifically, by extending to trees the translation in
the proof of \cite[Theorem~4.4]{Demri&Lazic09}, which is
from one-way alternating automata with $1$ register on finite data words
to incrementing counter automata on finite words,
we shall show that, for each ATRA$_1$ $\aregaut$,
an ITCA $\acaut_\aregaut$ with the same alphabet and such that
$\mathrm{L}(\acaut_\aregaut) =
 \{\mathrm{tree}(\adatatree) \,:\,
   \adatatree \in \mathrm{L}^{\mathit fin}(\aregaut)\}$,
is computable (in polynomial space).

Let $\aregaut = \tuple{\aalphabet, \locs, \aloc_I, F, \delta}$.
For a configuration $\aconf$ of $\aregaut$ and a datum $\adatum$,
let the \emph{bundle} of $\adatum$ in $\aconf$ be
the set of all states that are paired with $\adatum$,
i.e.\ $\{\aloc : \tuple{\aloc, \adatum} \in \aconf\}$.
The computation of $\acaut_\aregaut$ with the properties above
is based on the following abstraction of configurations of $\aregaut$
by mappings from $\mathcal{P}(\locs) \setminus \{\emptyset\}$ to $\mathbb{N}$.
The abstract configuration $\overline{\aconf}$ counts,
for each nonempty $\locster \subseteq \locs$,
the number of data whose bundles equal $\locster$:
\[\overline{\aconf}(\locster) =
  \length{\{\adatum \,:\,
            \{\aloc : \tuple{\aloc, \adatum} \in \aconf\} = \locster\}}\]
Thus, two configurations have the same abstraction iff
they are equal up to a bijective renaming of data.
For $1 \leq i \leq \overline{\aconf}(\locster)$ and $\aloc \in \locster$,
we shall call pairs $\tuple{\locster, i}$ \emph{abstract data}
and triples $\tuple{\aloc, \locster, i}$ \emph{abstract threads}.

For abstract configurations $\acval$, $\acvalbis_0$ and $\acvalbis_1$,
letters $\aletter$, and sets of states $\locs_=$ with
either $\acval(\locs_=) > 0$ or $\locs_= = \emptyset$,
we shall define transitions
$\acval
 \rightarrow_\aletter^{\locs_=}
 \acvalbis_0, \acvalbis_1$,
and show that they are bisimilar to transitions
$\aconf
 \rightarrow_\aletter^\adatumbis
 \aconfbis_0, \aconfbis_1$
such that $\acval = \overline{\aconf}$,
$\acvalbis_0 = \overline{\aconfbis_0}$,
$\acvalbis_1 = \overline{\aconfbis_1}$ and
$\locs_= = \{\aloc : \tuple{\aloc, \adatumbis} \in \aconf\}$.
The sets $\locs_=$ can hence be thought of as
abstractions of the data $\adatumbis$.
The abstract transitions will then give us a notion of
abstract run of $\aregaut$ on a finite tree (without data),
where the sets $\locs_=$ are guessed at every step.
By the bisimilarity, we shall have that:
\begin{describe}{(I)}
\item[(I)]
$\aregaut$ has an accepting abstract run on
a finite tree $\atree$ with alphabet $\aalphabet$ iff
it has an accepting run on some data tree $\adatatree$
such that $\atree = \mathrm{tree}(\adatatree)$.
\end{describe}
In other words, we shall have reduced the question of
whether $\mathrm{L}^{\mathit fin}(\aregaut)$ is nonempty,
i.e.\ whether there exists a finite tree with alphabet $\aalphabet$,
a data labelling of its nonleaf nodes,
and an accepting run of $\aregaut$ on the resulting data tree,
to whether there exists a finite tree and
an accepting abstract run of $\aregaut$ on it.
It will then remain to show how to compute (in polynomial space)
an ITCA $\acaut_\aregaut$ which guesses and checks
accepting abstract runs of $\aregaut$, so that:
\begin{describe}{(II)}
\item[(II)]
$\acaut_\aregaut$ has an accepting run on
a finite tree $\atree$ with alphabet $\aalphabet$ iff
$\aregaut$ has an accepting abstract run on $\atree$.
\end{describe}

To begin delivering our promises,
we now define transitions from abstract configurations $\acval$
for letters $\aletter$ and sets of states $\locs_=$ with
either $\acval(\locs_=) > 0$ or $\locs_= = \emptyset$
to abstract configurations $\acvalbis_0$ and $\acvalbis_1$,
essentially by reformulating the definition of concrete transitions
(cf.\ Section~\ref{ss:ATRA}) in terms of abstract threads.
For each abstract datum $\tuple{\locster, i}$ of $\acval$
and both $d \in \{0, 1\}$,
the abstract threads whose abstract datum is $\tuple{\locster, i}$
will contribute two sets of states to such a transition:
$\locsbis'(\locster, i)_d^\downarrow$,
for which the automaton's register is updated,
and $\locsbis'(\locster, i)_d^{\ndownarrow}$,
for which the automaton's register is not updated.
If $\locs_=$ is nonempty, we take $\tuple{\locs_=, 1}$
to represent the datum abstracted by $\locs^=$,
i.e.\ with which the register is updated,
so states in the union of the set $\locsbis'(\locs_=, 1)_d^{\ndownarrow}$
and all the sets $\locsbis'(\locster, i)_d^\downarrow$
will be associated to the same abstract datum of $\acvalbis_d$.
Formally, let 
$\acval
 \rightarrow_\aletter^{\locs_=}
 \acvalbis_0, \acvalbis_1$
mean that, for each abstract datum $\tuple{\locster, i}$ of $\acval$,
there exist sets of states
$\locsbis'(\locster, i)_0^\downarrow,
 \locsbis'(\locster, i)_0^{\ndownarrow},
 \locsbis'(\locster, i)_1^\downarrow,
 \locsbis'(\locster, i)_1^{\ndownarrow}$
such that:
\begin{itemize}
\item[(i)]
for each abstract thread $\tuple{\aloc, \locster, i}$ of $\acval$,
there exist
\[\locsbis_0^\downarrow,
  \locsbis_0^{\ndownarrow},
  \locsbis_1^\downarrow,
  \locsbis_1^{\ndownarrow}
  \models
  \delta(\aloc, \aletter, \tuple{\locster, i} = \tuple{\locs_=, 1})\]
which satisfy
$\locsbis_d^? \subseteq
 \locsbis'(\locster, i)_d^?$
for both $d \in \{0, 1\}$ and $? \in \{\downarrow, \ndownarrow\}$;
\item[(ii)]
for both $d \in \{0, 1\}$ and each nonempty $\locster' \subseteq \locs$,
we have
\[\length{\{\tuple{\locster, i} \,:\,
            \tuple{\locster, i} \neq \tuple{\locs_=, 1} \,\wedge\,
            \locsbis'(\locster, i)_d^{\ndownarrow} = \locster'\}} +
  \left\{\begin{array}{ll}
  1, & \mathrm{if}\ \locsbis^=_d = \locster' \\
  0, & \mathrm{otherwise}
  \end{array}\right\}
  \leq \acvalbis_d(\locster')\]
for some
$\locsbis^=_d \,\supseteq\,
 \locsbis'(\locs_=, 1)_d^{\ndownarrow} \,\cup\,
 \bigcup_{1 \leq i \leq \acval(\locster)} \locsbis'(\locster, i)_d^\downarrow$.
\end{itemize}

It is straightforward to check the following two-part correspondence between
the abstract transitions just defined and concrete transitions:
\begin{describe}{(IIIb)}
\item[(IIIa)]
Whenever
$\aconf
 \rightarrow_\aletter^\adatumbis
 \aconfbis_0, \aconfbis_1$,
we have
$\acval
 \rightarrow_\aletter^{\locs_=}
 \acvalbis_0, \acvalbis_1$,
where $\acval = \overline{\aconf}$,
$\acvalbis_0 = \overline{\aconfbis_0}$,
$\acvalbis_1 = \overline{\aconfbis_1}$ and
$\locs_= = \{\aloc : \tuple{\aloc, \adatumbis} \in \aconf\}$.
\item[(IIIb)]
Whenever $\overline{\aconf} = \acval$ and
$\acval
 \rightarrow_\aletter^{\locs_=}
 \acvalbis_0, \acvalbis_1$,
there exist $\adatumbis$, $\aconfbis_0$ and $\aconfbis_1$ such that
$\aconf
 \rightarrow_\aletter^\adatumbis
 \aconfbis_0, \aconfbis_1$,
$\acvalbis_0 = \overline{\aconfbis_0}$,
$\acvalbis_1 = \overline{\aconfbis_1}$ and
$\locs_= = \{\aloc : \tuple{\aloc, \adatumbis} \in \aconf\}$.
\end{describe}
Let $\alpha$ be a bijection between the abstract data of $\acval$
and the data that occur in $\aconf$, which is bundle preserving
(i.e., whenever $\alpha \tuple{\locster, i} = \adatum$,
we have that $\locster$ is the bundle of $\adatum$ in $\aconf$),
and if $\locs_=$ is nonempty then $\alpha \tuple{\locs_=, 1} = \adatumbis$.
\begin{itemize}
\item
To show (IIIa),
for each abstract datum $\tuple{\locster, i}$ of $\acval$
and both $d \in \{0, 1\}$,
take $\locsbis'(\locster, i)_d^\downarrow$ and $\locsbis^=_d$ to be
the bundle of $\adatumbis$ in $\aconfbis_d$,
and take $\locsbis'(\locster, i)_d^{\ndownarrow}$ to be
the bundle of $\alpha \tuple{\locster, i}$ in $\aconfbis_d$.
\item
For (IIIb), if $\locs_=$ is empty then take $\adatumbis$ to be
an arbitrary datum which does not occur in $\aconf$,
pick the same quadruples for the threads in $\aconf$
as for the corresponding (via $\alpha$) abstract threads of $\acval$,
and for both $d \in \{0, 1\}$, obtain $\aconfbis_d$ from $\acvalbis_d$
by replacing each set of abstract data
$\tuple{\locster', 1}$, \ldots,
$\tuple{\locster', \acvalbis_d(\locster')}$ with:
the data $\alpha \tuple{\locster, i}$ such that
$\tuple{\locster, i} \neq \tuple{\locs_=, 1}$ and
$\locsbis'(\locster, i)_d^{\ndownarrow} = \locster'$,
the datum $\adatumbis$ if $\locsbis^=_d = \locster'$,
and fresh further data if the inequality in (ii) is strict.
\end{itemize}

Composing abstract transitions gives us abstract runs of $\aregaut$.
Such a run on a finite tree $\tuple{\nodes, \aalphabet, \Lambda}$
is a mapping $\anode \mapsto \acval_\anode$
from the nodes to abstract configurations such that,
for each nonleaf $\anode$, there exists $\locs_=$ with
$\acval_\anode
 \rightarrow_{\Lambda(\anode)}^{\locs_=}
 \acval_{\anode \cdot 0}, \acval_{\anode \cdot 1}$,
and if $\anode$ is the root then $\aloc_I \in \locs_=$.
Defining the run to be accepting iff $\acval_\anode(\locster) = 0$
for all leaves $\anode$ and all $\locster \not\subseteq F$,
we have (I) above by (IIIa) and (IIIb).

We are now ready to define $\acaut_\aregaut$,
as an ITCA that performs the steps (1)--(9) below.
States of $\acaut_\aregaut$ are used for control and for storing
$\aletter$, $\locs_=$, $\mathit{root}$ (initially $\mathit{tt}$), $\locster$,
${\locsbis'}_0^\downarrow$,
${\locsbis'}_0^{\ndownarrow}$,
${\locsbis'}_1^\downarrow$,
${\locsbis'}_1^{\ndownarrow}$,
$\aloc$,
$\locsbis_0^\downarrow$,
$\locsbis_0^{\ndownarrow}$,
$\locsbis_1^\downarrow$,
$\locsbis_1^{\ndownarrow}$,
$d$, $?$ and
$\locsbis^=_d$.
There are $2^{\length{\locs}} - 1$ counters in the array $c$,
and $2^{\length{\locs}^4}$ counters in the array $c'$.
The steps are implemented by $\emptyword$-transitions,
except for the $\aletter$-transition in (4).
The choices are nondeterministic.
If a choice in (3.2) is impossible,
or a check in (2), (3.2) or (5) fails,
then $\acaut_\aregaut$ blocks.

The steps (1)--(9) guess and check
an accepting abstract run of $\aregaut$ on a finite tree.
The counter array $c$ is used to store abstract configurations,
and the counter array $c'$ is auxiliary.
The initial condition in the definition of abstract runs is checked in (2),
the final condition in (8), and steps (3)--(7) are essentially
a reformulation of the definition of abstract transitions.
This particular reformulation is tailored for
a development in the proof of Theorem~\ref{th:safety},
and is based on observing that the quadruples of sets
$\locsbis'(\locster, i)_0^\downarrow,
 \locsbis'(\locster, i)_0^{\ndownarrow},
 \locsbis'(\locster, i)_1^\downarrow,
 \locsbis'(\locster, i)_1^{\ndownarrow}$
for abstract data $\tuple{\locster, i} \neq \tuple{\locs_=, 1}$
do not need to be stored simultaneously,
i.e.\ that it suffices to store numbers of such identical quadruples,
which is done using the counter array $c'$.
\begin{describe}{(1)}
\item[(1)]
Choose $\aletter \in \aalphabet$, and $\locs_=$ with
either $c[\locs_=] > 0$ or $\locs_= = \emptyset$.
\item[(2)]
If $\mathit{root} = \mathit{tt}$,
then check that $\aloc_I \in \locs_=$ and set $\mathit{root} := \mathit{ff}$.
\item[(3)]
For each nonempty $\locster \subseteq \locs$,
while $c[\locster] > 0$ do:
\begin{describe}{(3.1)}
\item[(3.1)]
choose
${\locsbis'}_0^\downarrow,
 {\locsbis'}_0^{\ndownarrow},
 {\locsbis'}_1^\downarrow,
 {\locsbis'}_1^{\ndownarrow}
 \subseteq \locs$;
\item[(3.2)]
for each $\aloc \in \locster$,
choose
$\locsbis_0^\downarrow,
 \locsbis_0^{\ndownarrow},
 \locsbis_1^\downarrow,
 \locsbis_1^{\ndownarrow}
 \models
 \delta(\aloc, \aletter, \tuple{\locster, c[\locster]} = \tuple{\locs_=, 1})$,
and check that
$\locsbis_d^? \subseteq
 {\locsbis'}_d^?$
for both $d \in \{0, 1\}$ and $? \in \{\downarrow, \ndownarrow\}$;
\item[(3.3)]
decrement $c[\locster]$, and
if $\tuple{\locster, c[\locster]} = \tuple{\locs_=, 0}$,
then choose $\locsbis^=_d \supseteq {\locsbis'}_d^\downarrow \cup
                                    {\locsbis'}_d^{\ndownarrow}$
for both $d \in \{0, 1\}$,
else increment $c'[{\locsbis'}_0^\downarrow,
                   {\locsbis'}_0^{\ndownarrow},
                   {\locsbis'}_1^\downarrow,
                   {\locsbis'}_1^{\ndownarrow}]$.
\end{describe}
\item[(4)]
Perform an $\aletter$-transition,
forking with $d := 0$ and $d := 1$.
\item[(5)]
Check that $\locsbis^=_d \supseteq
            \bigcup \{{\locsbis'}_d^\downarrow :
                      c'[{\locsbis'}_0^\downarrow,
                         {\locsbis'}_0^{\ndownarrow},
                         {\locsbis'}_1^\downarrow,
                         {\locsbis'}_1^{\ndownarrow}] > 0\}$,
and increment $c[\locsbis^=_d]$.
\item[(6)]
Transfer each $c'[{\locsbis'}_0^\downarrow,
                  {\locsbis'}_0^{\ndownarrow},
                  {\locsbis'}_1^\downarrow,
                  {\locsbis'}_1^{\ndownarrow}]$
with nonempty ${\locsbis'}_d^{\ndownarrow}$
to $c[{\locsbis'}_d^{\ndownarrow}]$.
\item[(7)]
Reset (i.e.\ decrement until $0$) each $c'[{\locsbis'}_0^\downarrow,
                                           {\locsbis'}_0^{\ndownarrow},
                                           {\locsbis'}_1^\downarrow,
                                           {\locsbis'}_1^{\ndownarrow}]$
with empty ${\locsbis'}_d^{\ndownarrow}$.
\item[(8)]
If $c[\locster] = 0$ whenever $\locster \not\subseteq F$,
then pass through a final state.
\item[(9)]
Repeat from (1).
\end{describe}

Since $\acaut_\aregaut$ is an ITCA,
its runs may contain arbitrary errors that increase one or more counters.
Nevertheless, between executions of steps (3)--(7) by $\acaut_\aregaut$
and abstract transitions of $\aregaut$,
we have the following two-part correspondence.
It shows that the possibly erroneous executions of (3)--(7)
match the abstract transitions with the slack allowed by their definition,
which in turn match the concrete transitions
with their possible introductions of junk threads
(cf.\ (IIIa), (IIIb) and Remark~\ref{r:junk}).
\begin{describe}{(IVb)}
\item[(IVa)]
Whenever
$\acval
 \rightarrow_\aletter^{\locs_=}
 \acvalbis_0, \acvalbis_1$,
we have that $\acaut_\aregaut$ can perform steps (3)--(7)
beginning with any configuration such that
each $c[\locster]$ has value $\acval(\locster)$ and
each $c'[{\locsbis'}_0^\downarrow,
         {\locsbis'}_0^{\ndownarrow},
         {\locsbis'}_1^\downarrow,
         {\locsbis'}_1^{\ndownarrow}]$
has value $0$,
so that for both forks $d \in \{0, 1\}$ in (4),
the ending configuration is such that
each $c[\locster]$ has value $\acvalbis_d(\locster)$ and
each $c'[{\locsbis'}_0^\downarrow,
         {\locsbis'}_0^{\ndownarrow},
         {\locsbis'}_1^\downarrow,
         {\locsbis'}_1^{\ndownarrow}]$
has value $0$.
\item[(IVb)]
Whenever $\acaut_\aregaut$ can perform steps (3)--(7)
beginning with a configuration such that
$\aletter$ and $\locs_=$ are as in (1) and
each $c[\locster]$ has value $\acval(\locster)$,
so that for both forks $d \in \{0, 1\}$ in (4),
the ending configuration is such that
each $c[\locster]$ has value $\acvalbis_d(\locster)$,
we have
$\acval
 \rightarrow_\aletter^{\locs_=}
 \acvalbis_0, \acvalbis_1$.
\end{describe}
\begin{itemize}
\item
In proving (IVa), we can choose where incrementing errors occur.
For each iteration of (3.1)--(3.3),
let the quadruple chosen in (3.1) be
\[\locsbis'(\locster, c[\locster])_0^\downarrow,
  \locsbis'(\locster, c[\locster])_0^{\ndownarrow},
  \locsbis'(\locster, c[\locster])_1^\downarrow,
  \locsbis'(\locster, c[\locster])_1^{\ndownarrow}\]
so that (3.2) can succeed by (i) in the definition of abstract transitions.
It remains to match by incrementing errors, say at the end of (7),
any differences between the two sides of the inequalities in (ii).
\item
To obtain (IVb), let
$\locsbis'(\locster, i)_0^\downarrow,
 \locsbis'(\locster, i)_0^{\ndownarrow},
 \locsbis'(\locster, i)_1^\downarrow,
 \locsbis'(\locster, i)_1^{\ndownarrow}$
for each abstract datum $\tuple{\locster, i}$ of $\acval$
be the quadruple chosen in the last performance of (3.1) with $i = c[\locster]$
(due to incrementing errors, there may be more than one).
Step~(3.2) ensures that (i) is satisfied.
Since at the end of (3),
each $c'[{\locsbis'}_0^\downarrow,
         {\locsbis'}_0^{\ndownarrow},
         {\locsbis'}_1^\downarrow,
         {\locsbis'}_1^{\ndownarrow}]$
has value at least
\[\length{\{\tuple{\locster, i} \,:\,
            \tuple{\locster, i} \neq \tuple{\locs_=, 1} \,\wedge\,
            \forall d, ? (\locsbis'(\locster, i)_d^? = {\locsbis'}_d^?)\}}\]
steps (5) and (6) ensure that (ii) is satisfied.
\end{itemize}

Now, we have (II) above.
The `if' direction follows by (IVa),
and the `only if' direction by (IVb)
once we observe that, without loss of generality,
we can consider only runs of $\acaut_\aregaut$ that do not
contain incrementing errors on the array $c$ outside of steps (3)--(7)
except before the first performance of (1).

To conclude that polynomial space suffices for computing $\acaut_\aregaut$,
we observe that each of its state variables is
either from a fixed finite set, or an element of $\aalphabet$,
or an element or subset of $\locs$,
and that deciding satisfaction of transition formulae
$\delta(\aloc, \aletter, \tuple{\locster, c[\locster]} = \tuple{\locs_=, 1})$
in step~(3.2) amounts to evaluating Boolean formulae.
\end{proof}

We remark that, in the opposite direction to
the translation in the proof of Theorem~\ref{th:ATRA1},
by extending the proof of \cite[Theorem~5.2]{Demri&Lazic09} to trees,
for each ITCA $\acaut$, an ATRA$_1$ $\aregaut_\acaut$
is computable in logarithmic space such that
$\mathrm{L}^{\mathit fin}(\aregaut_\acaut)$ consists of
encodings of accepting runs of $\acaut$.
Moreover, similarly as on words,
the two translations can be extended to infinite trees,
where ATRA$_1$ are equipped with weak acceptance
and ITCA with B\"uchi acceptance.
Instead of decidable and not primitive recursive as on finite trees,
nonemptiness for those two classes of automata can then be shown
co-r.e.-complete.

\section{Safety Automata}
\label{s:safety}

We now show decidability of nonemptiness of
forward alternating tree $1$-register automata
with safety acceptance over finite or infinite data trees.
More precisely, since the class of safety ATRA$_1$
is not closed under complement,
but is closed under intersection and union
(cf.\ Proposition~\ref{pr:closure.RA}(b)),
we show decidability of the inclusion problem,
which implies decidability of nonemptiness of
Boolean combinations of safety ATRA$_1$.
However, already for the subproblems of nonemptiness and nonuniversality,
we obtain non-elementary and non-primitive recursive lower bounds
(respectively).


\begin{theorem}
\label{th:safety}
For safety ATRA$_1$,
inclusion is decidable,
nonemptiness is not elementary, and
nonuniversality is not primitive recursive.
\end{theorem}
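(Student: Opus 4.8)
For inclusion, the plan is to decide, given safety ATRA$_1$ $\aregaut_1$ and $\aregaut_2$, whether $\mathrm{L}^{\mathit saf}(\aregaut_1) \subseteq \mathrm{L}^{\mathit saf}(\aregaut_2)$. Since the class is closed under intersection and dual to the co-safety class (Proposition~\ref{pr:closure.RA}(b)), this is equivalent to emptiness of $\mathrm{L}^{\mathit saf}(\aregaut_1) \cap \mathrm{L}^{\mathit cos}(\overline{\aregaut_2})$: a data tree lies in this set iff $\aregaut_1$ has a final run on it and $\overline{\aregaut_2}$ has a final \emph{finite} run on it. I would run both automata as a single ATRA$_1$ $\aregaut$ over the disjoint union of state sets $\locs_1 \sqcup \locs_2'$ (with conjunctive start, as in Proposition~\ref{pr:closure.RA}), so that the abstraction of Theorem~\ref{th:ATRA1} --- counting, for each datum, the bundle of states paired with it --- applies verbatim to the combined threads, tracking the data shared by the two automata. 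The finiteness of the $\overline{\aregaut_2}$-run means that on every branch its threads vanish, and since runs are over finitely-branching trees, K\"onig's Lemma yields a uniform frontier level below which only $\aregaut_1$ is active and must still safety-accept.

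The combined acceptance thus splits into two phases, and I would capture the bottom one by a set $\mathrm{Live}$ of abstract $\aregaut_1$-configurations from which $\aregaut_1$ has a safety-accepting continuation, defined as the greatest fixpoint $\mathrm{Live} = \nu X.\, \{\acval : \acval\ \text{is all-final}\} \cup \{\acval : \exists\, \acval \rightarrow \acvalbis_0, \acvalbis_1\ \text{with}\ \acvalbis_0, \acvalbis_1 \in X\}$. Because incrementing errors make the abstract transition relation downwards compatible (cf.\ (*) in the proof of Theorem~\ref{th:ITCA} and Remark~\ref{r:junk}), $\mathrm{Live}$ is downward-closed for the well-quasi-ordering $\preceq$ (Higman's Lemma \cite{Higman52}, as in Theorem~\ref{th:ITCA}); its complement is then upward-closed and finitely generated, so the backward iteration for well-structured transition systems stabilises and computes it \cite{Finkel&Schnoebelen01}. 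I would then build an ITCA for $\aregaut$ whose $\emptyword$-transitions declare a leaf accepting exactly when either the whole configuration is all-final (a genuine data-tree leaf for both components) or the $\overline{\aregaut_2}$-component is empty and the $\aregaut_1$-part lies in $\mathrm{Live}$; nonemptiness of this ITCA is decidable by Theorem~\ref{th:ITCA}, and a K\"onig's-Lemma argument as in Proposition~\ref{pr:safety.RA} turns an abstract witness into an actual data tree in the intersection. I expect computing $\mathrm{Live}$ and stitching the two phases to be the main obstacle: one must show that the greatest fixpoint really coincides with safety-acceptability over possibly infinite trees, that it is effectively computable despite ranging over infinitely many abstract configurations, and that the bisimulation (IIIa)--(IIIb) and the error-matching (IVa)--(IVb) of Theorem~\ref{th:ATRA1} lift to the safety and co-safety modes.

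For the non-elementary lower bound on nonemptiness, I would exploit the family $\aregautbis_k$ of Example~\ref{ex:ATRA}, computable in space logarithmic in $k$, which safety-accepts data trees carrying up to $2 \Uparrow k$ mutually distinct data indexed by the nodes labelled $\aletterbis_k$. Using these $2 \Uparrow k$ data as addresses of tape cells, I would encode, by an automaton of size polynomial in $k$ and in the size of a Turing machine $M$, a space-$2 \Uparrow k$ computation of $M$: successive configurations become consecutive right-subtrees, and the distinct data link each cell to its copy in the next configuration so that unchanged cells are propagated while the transition function is enforced locally (the data-as-pointers technique underlying the word lower bounds of \cite{Demri&Lazic09}). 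Nonemptiness of the resulting safety ATRA$_1$ then holds iff $M$ accepts within space $2 \Uparrow k$; since this subsumes problems of non-elementary complexity and the automaton is built in space polynomial in $k + \length{M}$, nonemptiness is not elementary.

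For nonuniversality I would use duality: by Proposition~\ref{pr:closure.RA}(b), $\mathrm{L}^{\mathit saf}(\aregaut)$ is universal iff $\mathrm{L}^{\mathit cos}(\overline{\aregaut})$ is empty, so nonuniversality of safety ATRA$_1$ coincides with nonemptiness of co-safety ATRA$_1$. Given any ATRA$_1$ $\aregaut$ read over finite data trees, I would modify it so that, wherever it would accept at a leaf, all threads instead die, and let the data tree continue arbitrarily below; a finite data tree in $\mathrm{L}^{\mathit fin}(\aregaut)$ then extends to an infinite data tree co-safety-accepted by the modification, and conversely a co-safety witness truncates to a finite accepted data tree. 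Hence co-safety nonemptiness subsumes nonemptiness over finite data trees, which is not primitive recursive by Theorem~\ref{th:ATRA1}, yielding the stated lower bound.
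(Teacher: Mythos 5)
Your treatment of inclusion follows the paper's route in all essentials: the same reduction to emptiness of $\mathrm{L}^{\mathit saf}(\aregaut_1) \cap \mathrm{L}^{\mathit cos}(\overline{\aregaut_2})$, the same counting abstraction, and the same well-structured-transition-system machinery --- your greatest-fixpoint set $\mathrm{Live}$, computed by backward iteration over an upward-closed complement, is exactly the complement of the paper's basis $\aconfss_T$ of levels from which every transition sequence is finite, and your ``reach a level with empty $\locs_2$-part whose $\aregaut_1$-part is live'' is a rephrasing of the paper's condition (**). The obstacles you flag are the right ones; the paper resolves the $\emptyword$-cycle issue by passing to counter automata with nondeterministic transfers. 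Your nonuniversality argument via duality (nonuniversality of safety equals nonemptiness of co-safety, which subsumes finitary nonemptiness) is a legitimate alternative to the paper's citation of the data-word result of Lazi\'c; note only that no modification of $\aregaut$ is needed --- truncating a co-safety witness at the level where the run becomes empty already yields a finite data tree accepted by $\aregaut$ itself, whereas your ``kill all threads at final states'' modification enlarges the finite-tree language and so breaks the converse direction as you state it.

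The genuine gap is in the non-elementary lower bound. A direct simulation of a space-$(2 \Uparrow k)$ Turing machine, with tape cells addressed by the $2 \Uparrow k$ data of $\aregautbis_k$ and contents propagated between consecutive configurations by data matching, is not something a forward alternating automaton with one register can enforce. There are two obstructions. First, the data supplied by $\aregautbis_k$ carry no linear order, so ``the cell with datum $d'$ is the right neighbour of the cell with datum $d$'' is not expressible, and without adjacency the local transition constraint of a Turing machine cannot even be stated. Second, and more fundamentally, a forward one-register automaton can require, for each datum seen now, that something happen at a matching datum later, but it cannot forbid later regions from containing extra fresh data, and a safety automaton cannot force a matching datum to actually occur (cf.\ the weakening of (iii) to (iii') in Example~\ref{ex:ATRA}); so the ``next configuration'' may acquire spurious cells and silently lose genuine ones, which corrupts a Turing-machine tape irreparably. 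This is precisely the incrementing-error phenomenon, and it is why the paper does not simulate Turing machines: it reduces from the intermediate problem (***) about deterministic counter machines whose computations may contain incrementing errors but whose counters are bounded by $2 \Uparrow m$, proves that problem non-elementary via the complement-counter construction $\widehat{\acaut}$ (stage~(III) transfers $\overline{c_i}$ onto $c_i$, so any erroneous increment pushes the sum past the $2 \Uparrow m$ budget and is thereby detected), and only then encodes such bounded faulty computations into safety ATRA$_1$, using $\aregautbis_m$ on right-hand subtrees to cap the number of pending increments. Without an error-detection mechanism of this kind, nonemptiness of your automaton would not imply that $M$ accepts, and the reduction fails.
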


\begin{proof}
Showing that the inclusion problem is decidable will involve extending:
\begin{itemize}
\item
the proof of Proposition~\ref{pr:closure.RA} to obtain
an intersection of a safety and a co-safety ATRA$_1$,
which can be seen as a weak parity ATRA$_1$ with $2$ priorities;
\item
the proof of Theorem~\ref{th:ATRA1} to obtain
an ITCA with a more powerful set of instructions
and no cycles of $\emptyword$-transitions,
which can also be seen as having weak parity acceptance with $2$ priorities;
\item
the proof of Theorem~\ref{th:ITCA} to obtain
decidability of nonemptiness of such ITCA.
\end{itemize}
To maintain focus,
we shall avoid introducing the extended notions in general,
but concentrate on what is necessary for this part of the proof.

Suppose
$\aregaut_1 = \tuple{\aalphabet, \locs_1, \aloc_I^1, F_1, \delta_1}$ and
$\aregaut_2 = \tuple{\aalphabet, \locs_2, \aloc_I^2, F_2, \delta_2}$
are ATRA$_1$, where we need to determine whether
$\mathrm{L}^{\mathit saf}(\aregaut_1)$ is a subset of
$\mathrm{L}^{\mathit saf}(\aregaut_2)$.
By the proof of Proposition~\ref{pr:closure.RA}(b),
that amounts to emptiness of the intersection of
$\mathrm{L}^{\mathit saf}(\aregaut_1)$ and
$\mathrm{L}^{\mathit cos}(\overline{\aregaut_2})$, where
$\overline{\aregaut_2} =
 \tuple{\aalphabet, \locs_2, \aloc_I^2, \overline{F_2}, \overline{\delta_2}}$
is the dual automaton to $\aregaut_2$.
Assuming that $\locs_1$ and $\locs_2$ are disjoint,
and do not contain $\aloc_I^\cap$, let
\[\aregaut_\cap =
  \tuple{\aalphabet, \{\aloc_I^\cap\} \cup \locs_1 \cup \locs_2,
         \aloc_I^\cap, F_1 \cup F_2, \delta_\cap}\]
be the automaton for the intersection of
$\aregaut_1$ and $\overline{\aregaut_2}$:
\[\delta_\cap(\aloc, \aletter, p) =
\left\{\begin{array}{ll}
\delta(\aloc_I^1, \aletter, p) \wedge
\delta(\aloc_I^2, \aletter, p), &
\mathrm{if}\ \aloc = \aloc_I^\cap
\\
\delta_1(\aloc, \aletter, p), &
\mathrm{if}\ \aloc \in \locs_1
\\
\overline{\delta_2(\aloc, \aletter, p)}, &
\mathrm{if}\ \aloc \in \locs_2
\end{array}\right.\]
We then have:
\begin{describe}{(*)}
\item[(*)]
A data tree $\adatatree$ with alphabet $\aalphabet$ is
safety-accepted by $\aregaut_1$ and
co-safety-accepted by $\overline{\aregaut_2}$
iff $\aregaut_\cap$ has a run on $\adatatree$
which is final and $\locs_2$-finite,
i.e.\ there exists $l$ such that
the configuration at each node of length at least $l$
contains no threads with states from $\locs_2$.
\end{describe}

Before proceeding, let
\emph{incrementing tree counter automata with nondeterministic transfers}
(shortly, \emph{ITCANT}) be defined as ITCA (cf.\ Section~\ref{ss:ITCA}),
except that $\tuple{\mathtt{ifz}, c}$ instructions are replaced by
$\tuple{\mathtt{transf}, c, C}$ for counters $c$ and sets of counters $C$.
Such an instruction is equivalent to a loop which executes while $c$ is nonzero,
and in each iteration, decrements $c$ and increments some counter in $C$.
However, in presence of incrementing errors, the loop may not terminate,
whereas $\tuple{\mathtt{transf}, c, C}$ instructions are considered atomic.
The effect of $\tuple{\mathtt{transf}, c, C}$ is therefore
to transfer the value of $c$ to the counters in $C$,
among which it is split nondeterministically.
In particular, $\tuple{\mathtt{ifz}, c}$ instructions
can be reintroduced as $\tuple{\mathtt{transf}, c, \emptyset}$.

Now, steps (1)--(9) in the proof of Theorem~\ref{th:ATRA1}
can be implemented by an ITCANT which uses nondeterministic transfers
instead of the loops in (3), (6) and (7), and whose transition relation
therefore contains no cycles of $\emptyword$-transitions.
More specifically, each reset in (7) can be implemented as
a transfer to a new auxiliary counter $c''$,
(6) already consists of transfers to single counters,
and (3) can be replaced by the following two steps:
\begin{describe}{(3b)}
\item[(3a)]
If $\locs_= \neq \emptyset$,
then decrement $c[\locs_=]$
and choose $\locsbis^=_0,
            \locsbis^=_1
            \subseteq \locs$
such that, for each $\aloc \in \locs_=$,
there exist
$\locsbis_0^\downarrow,
 \locsbis_0^{\ndownarrow},
 \locsbis_1^\downarrow,
 \locsbis_1^{\ndownarrow}
 \models
 \delta(\aloc, \aletter, \mathit{tt})$
with
$\locsbis^=_d \supseteq \locsbis_d^\downarrow \cup
                        \locsbis_d^{\ndownarrow}$
for both $d \in \{0, 1\}$.
\item[(3b)]
Transfer each $c[\locster]$ nondeterministically
to the set of all $c'[{\locsbis'}_0^\downarrow,
                      {\locsbis'}_0^{\ndownarrow},
                      {\locsbis'}_1^\downarrow,
                      {\locsbis'}_1^{\ndownarrow}]$
such that, for each $\aloc \in \locster$,
there exist
$\locsbis_0^\downarrow,
 \locsbis_0^{\ndownarrow},
 \locsbis_1^\downarrow,
 \locsbis_1^{\ndownarrow}
 \models
 \delta(\aloc, \aletter, \mathit{ff})$
with
$\locsbis_d^? \subseteq
 {\locsbis'}_d^?$
for both $d \in \{0, 1\}$ and $? \in \{\downarrow, \ndownarrow\}$.
\end{describe}
Let $\acaut_\cap$ be such an ITCANT for $\aregaut_\cap$,
which in addition performs the following step between (7) and (8),
where $\mathit{prop}$ is a state variable, initially $\mathit{ff}$:
\begin{describe}{(7$\frac{1}{2}$)}
\item[(7$\frac{1}{2}$)]
If $c[\locster] = 0$ whenever $\locster \cap \locs_2 \neq \emptyset$,
then set $\mathit{prop} := \mathit{tt}$.
\end{describe}
As in the proof of Theorem~\ref{th:ATRA1},
we have that $\acaut_\cap$ is computable from $\aregaut_\cap$,
and therefore from $\aregaut_1$ and $\aregaut_2$, in polynomial space.
Also, $\aregaut_\cap$ satisfies (IIIa) and (IIIb),
and $\aregaut_\cap$ and $\acaut_\cap$ satisfy (IVa) and (IVb).
Recalling that $\acaut_\cap$ contains no cycles of $\emptyword$-transitions,
we infer the following from (*) above,
where the notion of transitions between levels of $\acaut_\cap$
is as in the proof of Theorem~\ref{th:ITCA},
and $P$ denotes the set of all states of $\acaut_\cap$
in which $\mathit{prop}$ has value $\mathit{tt}$:
\begin{describe}{(**)}
\item[(**)]
$\mathrm{L}^{\mathit saf}(\aregaut_1)$ is a subset of
$\mathrm{L}^{\mathit saf}(\aregaut_2)$
iff there does not exist an infinite sequence of transitions
$\aconfs_0 \rightarrow \aconfs_1 \rightarrow \cdots$
which is from the initial level of $\acaut_\cap$ and such that
some $\aconfs_i$ contains only states from $P$.
\end{describe}

To conclude decidability of inclusion, we show that,
given an ITCANT $\acaut_\cap$ and a set $P$ of its states,
existence of an infinite sequence of transitions as in (**) is decidable.
For a set $\aconfss$ of levels of $\acaut_\cap$,
we write ${\uparrow} \aconfss$ to denote
its upward closure with respect to $\preceq$:
the set of all $\aconfsbis$ for which there exists
$\aconfs \in \aconfss$ with $\aconfs \preceq \aconfsbis$.
We say that $\aconfss$ is upwards closed
iff $\aconfss = {\uparrow} \aconfss$, and
we say that $\aconfssbis$ is a basis for $\aconfss$
iff $\aconfss = {\uparrow} \aconfssbis$.
As in the proof of Theorem~\ref{th:ITCA}, we have that
successor sets with respect to $\rightarrow$ are computable,
$\preceq$ is a well-quasi-ordering,
$\preceq$ is strongly (in particular, reflexively)
downward-compatible with $\rightarrow$, and
$\preceq$ is decidable.
Hence, by \cite[Proposition~5.4]{Finkel&Schnoebelen01},
a finite basis $\aconfss_R$ of the upward closure of
the set of all levels reachable from the initial level is computable.
By the strong downward compatibility,
the set of all levels from which
there exists an infinite sequence of transitions
is downwards closed, so its complement is upwards closed.
We claim that a finite basis $\aconfss_T$ of the latter set is computable.
With that assumption, since a finite basis $\aconfss_N$ of the set of
all levels that contain some state not from $P$ is certainly computable,
we are done because there does not exist
an infinite sequence of transitions as in (**)
iff ${\uparrow} \aconfss_R$ is a subset of the union of
${\uparrow} \aconfss_T$ and ${\uparrow} \aconfss_N$.

It remains to establish the claim.
For a finite set $\aconfss'$ of levels of $\acaut_\cap$, let
\[K(\aconfss') =
  1 +
  \max_{\aconfs' \in \aconfss'}
  \max_{\tuple{\aloc, \acval} \in \aconfs'}
  \sum_{c \in \{1, \ldots, k\}} \acval(c)\]
where $k$ is the number of counters of $\acaut_\cap$.
Let also $\mathrm{Pred}_\forall(\aconfss')$ be the upwards-closed set
consisting of all $\aconfs$ such that,
whenever $\aconfs \rightarrow \aconfs'$,
we have $\aconfs' \in {\uparrow} \aconfss'$.
Observe that, whenever $\aconfs \in \mathrm{Pred}_\forall(\aconfss')$,
there exists $\aconfs_\dag \in \mathrm{Pred}_\forall(\aconfss')$
such that $\aconfs_\dag \preceq \aconfs$ and, for each
$\tuple{\aloc, \acval} \in \aconfs_\dag$ and $c \in \{1, \ldots, k\}$,
$\acval(c) \leq K(\aconfss')$.
Hence, a finite basis of $\mathrm{Pred}_\forall(\aconfss')$ is computable,
so the following is an effective procedure:
\begin{describe}{(iii)}
\item[(i)]
Begin with $\aconfss_T := \emptyset$.
\item[(ii)]
Let $\aconfssbis$ be a finite basis of $\mathrm{Pred}_\forall(\aconfss_T)$.
\item[(iii)]
If $\aconfssbis \not\subseteq {\uparrow} \aconfss_T$,
then set $\aconfss_T := \aconfss_T \cup \aconfssbis$ and repeat from (ii),
else terminate.
\end{describe}
Since $\preceq$ is a well-quasi-ordering,
the procedure terminates and computes a basis of
the set of all levels from which
every sequence of transitions is finite,
as required.

We shall establish that nonemptiness of safety ATRA$_1$
is not elementary by a two-stage reduction, which separates
dealing with the inability of one-way alternating $1$-register automata
to detect incrementing errors in encodings of computations of counter machines,
from ensuring acceptance only of encodings of computations
in which counters are bounded by a tower of exponentiations.
More precisely, we shall use the following problem as intermediary.
The notation $2 \Uparrow m$ is as in Example~\ref{ex:ATRA}.
\begin{describe}{(***)}
\item[(***)]
Given a deterministic counter machine $\acaut$ and $m \geq 1$ in unary,
does $\acaut$ have a computation which possibly contains incrementing errors,
in which every counter value is at most $2 \Uparrow m$,
and which is either halting or infinite?
\end{describe}
Such a machine is a tuple $\tuple{\locs, \aloc_I, \aloc_H, k, \delta}$ where:
$\locs$ is a finite set of states,
$\aloc_I$ is the initial state,
$\aloc_H$ is the halting state,
$k \in \mathbb{N}$ is the number of counters, and
$\delta: \locs \setminus \{\aloc_H\} \,\rightarrow\,
         \{1, \ldots, k\} \times (\locs \cup \locs^2)$
is a transition function.
Thus, from a state $\aloc \neq \aloc_H$,
either $\delta(\aloc)$ is of the form $\tuple{c, \aloc'}$,
which means that the machine increments $c$ and goes to $\aloc'$,
or $\delta(\aloc)$ is of the form $\tuple{c, \aloc', \aloc''}$,
which means that, if $c$ is zero, then the machine goes to $\aloc'$,
else it decrements $c$ and goes to $\aloc''$.
More precisely, a configuration is a state together with a counter valuation,
and we write $\tuple{\aloc, \acval} \rightarrow \tuple{\aloc', \acval'}$ iff,
for some $\acval_\surd \geq \acval$ and $\acval'_\surd \leq \acval'$,
\begin{itemize}
\item
either $\delta(\aloc) = \tuple{c, \aloc'}$ and
$\acval'_\surd = \acval_\surd[c \mapsto \acval_\surd(c) + 1]$,
\item
or $\delta(\aloc) = \tuple{c, \aloc', \aloc''}$,
$\acval_\surd(c) = 0$ and $\acval'_\surd = \acval_\surd$,
\item
or $\delta(\aloc) = \tuple{c, \aloc'', \aloc'}$ and
$\acval'_\surd = \acval_\surd[c \mapsto \acval_\surd(c) - 1]$.
\end{itemize}
We say that the transition is error-free iff
the above holds with $\acval_\surd = \acval$ and $\acval'_\surd = \acval'$.
A computation is a sequence
$\tuple{\aloc_0, \acval_0} \rightarrow
 \tuple{\aloc_1, \acval_1} \rightarrow
 \cdots$
such that $\aloc_0 = \aloc_I$ and $\acval = \mathbf{0}$.

To show that (***) is not elementary, we reduce from
the problem of whether a deterministic $2$-counter machine of size $m$
has an error-free halting computation of length at most $2 \Uparrow m$.
Given such a machine $\acaut$ whose counters are $c_1$ and $c_2$,
let $\widehat{\acaut}$ be a deterministic machine with counters
$c_1$, $c_2$, $\overline{c_1}$, $\overline{c_2}$,
$c^\dag$, $c'$, $c''$ and $c'''$,
which performs the following and then halts:
\begin{describe}{(III)}

\item[(I)]
For both $i \in \{1, 2\}$, set $\overline{c_i}$ to $2 \Uparrow m$
by executing the pseudo-code in Figure~\ref{f:tetration}.
The loops over $c'$, $c''$ and $c'''$ implement
$\overline{c_i} := 2 \Uparrow c'$,
$\overline{c_i} := 2^{c''}$ and
$\overline{c_i} := 2 \times c'''$
(respectively).

\item[(II)]
Simulate $\acaut$ using $c_1$ and $c_2$, and after each step:
\begin{itemize}
\item
increment $c^\dag$;
\item
if $c_i$ has been incremented, then decrement $\overline{c_i}$;
\item
if $c_i$ has been decremented, then increment $\overline{c_i}$;
\item
if $\acaut$ has halted, then go to (III).
\end{itemize}

\item[(III)]
For both $i \in \{1, 2\}$, transfer $\overline{c_i}$ to $c_i$.
\end{describe}

\begin{narrowfig}{.67\textwidth}
\begin{tabbing}
$c' := m$;
$\mathtt{inc}(\overline{c_i})$; \\
\textbf{while} $c' > 0$ \\
\{ \= $\mathtt{dec}(c')$;
      \textbf{while} $\overline{c_i} > 0$
      \{ $\mathtt{dec}(\overline{c_i})$;
         $\mathtt{inc}(c'')$ \};
      $\mathtt{inc}(\overline{c_i})$; \\
   \> \textbf{while} $c'' > 0$ \\
   \> \{ \= $\mathtt{dec}(c'')$;
            \textbf{while} $\overline{c_i} > 0$
            \{ $\mathtt{dec}(\overline{c_i})$;
               $\mathtt{inc}(c''')$ \}; \\
   \>    \> \textbf{while} $c''' > 0$
            \{ $\mathtt{dec}(c''')$;
               $\mathtt{inc}(\overline{c_i})$;
               $\mathtt{inc}(\overline{c_i})$ \} \} \}
\end{tabbing}
\caption{Computing $2 \Uparrow m$}
\label{f:tetration}
\end{narrowfig}

Observe that $\widehat{\acaut}$ is computable in space logarithmic in $m$.
If $\acaut$ has an error-free halting computation
of length at most $2 \Uparrow m$,
running $\widehat{\acaut}$ without errors indeed halts and
does not involve counter values greater than $2 \Uparrow m$.
For the converse, suppose $\widehat{\acaut}$ has a computation
which possibly contains incrementing errors,
in which every counter value is at most $2 \Uparrow m$,
and which is either halting or infinite.
By the construction of $\widehat{\acaut}$
and the boundedness of counter values,
the computation cannot be infinite, so it is halting.
Since $\overline{c_1}$ and $\overline{c_2}$
were set to $2 \Uparrow m$ by stage (I),
and since stage (III) terminated,
the halting computation of $\acaut$ in stage (II) must have been error-free
and it is certainly of length at most $2 \Uparrow m$.

To reduce from (***) to nonemptiness of safety ATRA$_1$,
consider a deterministic counter machine
$\acaut = \tuple{\locs, \aloc_I, \aloc_H, k, \delta}$
and $m \geq 1$.
We can assume that $\aloc' \neq \aloc''$
whenever $\delta(\aloc) = \tuple{c, \aloc', \aloc''}$.
By the proof of \cite[Theorem~5.2]{Demri&Lazic09},
which uses essentially the same encoding of
computations of counter machines into data words
as in the proof of \cite[Theorem~14]{Bojanczyketal06a},
we have that an ATRA$_1$ $\aregaut_\acaut$ with alphabet $\locs$
is computable in space logarithmic in $\length{\acaut}$,
such that it safety-accepts a data tree $\adatatree$
iff the left-most path in $\adatatree$
(i.e., the sequence of nodes obtained by
starting from the root and repeatedly taking the left-hand child)
satisfies the following:
\begin{itemize}

\item
the letter of the first node is $\aloc_I$, and
either the letter of the last nonleaf node is $\aloc_H$
or the sequence is infinite;

\item
for all letters $\aloc$ and $\aloc'$ of
two consecutive nodes $\anode$ and $\anode'$ (respectively),
\begin{itemize}
\item
either $\delta(\aloc)$ is of the form $\tuple{c, \aloc'}$
and we say that $\anode$ is $c$-decrementing,
\item
or $\delta(\aloc)$ is of the form $\tuple{c, \aloc', \aloc''}$
and we say that $\anode$ is $c$-zero-testing,
\item
or $\delta(\aloc)$ is of the form $\tuple{c, \aloc'', \aloc'}$
and we say that $\anode$ is $c$-decrementing;
\end{itemize}

\item
for each counter $c$,
no two $c$-incrementing nodes are labelled by the same datum,
no two $c$-decrementing nodes are labelled by the same datum,
and whenever a $c$-incrementing node $\anode$
is followed by a $c$-zero-testing node $\anode'$,
then a $c$-decrementing node with the same datum as $\anode$
must occur between $\anode$ and $\anode'$.
\end{itemize}
Hence, by taking the left-most paths in data trees
that are safety-accepted by $\aregaut_\acaut$ and erasing data,
we obtain exactly the sequences of states of halting or infinite
computations of $\acaut$ which possibly contain incrementing errors.
Assuming that $\aletterbis_1$, \ldots, $\aletterbis_m$, $*$ are not in $\locs$,
to restrict further to computations of $\acaut$
in which every counter value is at most $2 \Uparrow m$,
it suffices to strengthen $\aregaut_\acaut$ to obtain
a safety ATRA$_1$ $\aregaut_\acaut^{2 \Uparrow m}$
with alphabet $\locs \cup \{\aletterbis_1, \ldots, \aletterbis_m, *\}$
which requires that:
\begin{itemize}
\item
whenever a node $\anode$ in the left-most path is $c$-incrementing,
then the automaton $\aregautbis_m$ from Example~\ref{ex:ATRA}
safety-accepts the right-hand subtree at $\anode$;
\item
whenever a node $\anode$ in the left-most path is $c$-incrementing,
$\anode'$ is either $\anode$ or a subsequent $c$-incrementing node,
and no $c$-decrementing node with the same datum as $\anode$
occurs between $\anode$ and $\anode'$,
then the right-hand subtree at $\anode'$ contains
a node with letter $\aletterbis_m$ and the same datum as $\anode$.
\end{itemize}

Finally, that nonuniversality of safety ATRA$_1$ is not primitive recursive
follows from the same lower bound for nonuniversality of
safety one-way alternating automata with $1$ register over data words
\cite{Lazic06}.
\end{proof}

\section{XPath Satisfiability}
\label{s:XPath}

In this section, we first describe how XML documents and DTDs
can be represented by data trees and tree automata.
We then introduce a forward fragment of XPath, and a safety subfragment.
By translating XPath queries to
forward alternating tree $1$-register automata,
and applying results from Sections \ref{s:dec.fin} and \ref{s:safety},
we obtain decidability of satisfiablity
for forward XPath on finite documents and
for safety forward XPath on finite or infinite documents.

\subsubsection*{XML Trees}

Suppose $\aalphabet$ is a finite set of element types,
$\aalphabet'$ is a finite set of attribute names,
and $\aalphabet$ and $\aalphabet'$ are disjoint.
An XML document \cite{Bray&Paoli&Sperberg-McQueen98}
is an unranked ordered tree whose every node $\anode$
is labelled by some $\mathtt{type}(\anode) \in \aalphabet$
and by a datum for each element of some
$\mathtt{atts}(\anode) \subseteq \aalphabet'$.
Motivated by processing of XML streams
(cf., e.g., \cite{Olteanu&Furche&Bry04}),
we do not restrict our attention to finite XML documents.

Concerning the data in XML documents, we shall consider only
the equality predicate between data labels.  Equality comparisons
with constants are straightforward to encode using additional attribute names.
Therefore, similarly as \citeN{Bojanczyketal09},
we represent an XML document by
a data tree with alphabet $\aalphabet \cup \aalphabet'$,
where each node $\anode$ is represented by
a sequence of $1 + \length{\mathtt{atts}(\anode)}$ nodes:
the first node is labelled by $\mathtt{type}(\anode)$,
the labels of the following nodes enumerate $\mathtt{atts}(\anode)$,
the children of the last node represent
the first child and the next sibling of $\anode$ (if any),
and for each preceding node in the sequence,
its left-hand child is the next node and its right-hand child is a leaf.
We say that such a data tree is an \emph{XML tree}.

Following \citeN{Benedikt&Fan&Geerts08} and \citeN{Bojanczyketal09},
we assume without loss of generality that document type definitions (DTDs)
\cite{Bray&Paoli&Sperberg-McQueen98} are given as regular tree languages.
More precisely, we consider a DTD to be
a forward nondeterministic tree automaton $\atreeaut$
with alphabet $\aalphabet \cup \aalphabet'$
and without $\emptyword$-transitions.
Such automata can be defined by omitting counters and $\emptyword$-transitions
from ITCA (cf.\ Section~\ref{ss:ITCA}).
Infinite trees are processed in safety mode,
i.e.\ the condition that an infinite run of $\atreeaut$
has to satisfy to be accepting is the same as for finite runs:
for each leaf $\anode$, the state of the configuration at $\anode$ is final.
An XML tree $\adatatree$ as above is regarded to satisfy $\atreeaut$ iff
$\atreeaut$ accepts $\mathtt{tree}(\adatatree)$.

\subsubsection*{Fragments of XPath}

The fragment of XPath \cite{Clark&DeRose99} below
contains all operators commonly found in practice and
was considered in \cite{Benedikt&Fan&Geerts08,Geerts&Fan05}.
The grammars of queries $\aquery$ and qualifiers $\aqual$
are mutually recursive.
The element types $\aletter$ and attribute names $\aletter'$
range over $\aalphabet$ and $\aalphabet'$, respectively.
\begin{eqnarray*}
\aquery & ::= &
\emptyquery \,|\,
{\edown} \,|\, {\eup} \,|\, {\eright} \,|\, {\eleft} \,|\,
{\edown}^* \,|\, {\eup}^* \,|\, {\eright}^* \,|\, {\eleft}^* \,|\,
\aquery / \aquery \,|\, \aquery \cup \aquery \,|\, \aquery[\aqual] \\
\aqual & ::= &
\neg \aqual \,|\, \aqual \wedge \aqual \,|\,
\aquery? \,|\, \aletter \,|\,
\aquery/@\aletter' = \aquery/@\aletter' \,|\,
\aquery/@\aletter' \neq \aquery/@\aletter'
\end{eqnarray*}

We say that a query or qualifier is \emph{forward} iff:
\begin{itemize}
\item
it does not contain $\eup$, $\eleft$, $\eup^*$ or $\eleft^*$;
\item
for every subqualifier of the form
$\aquery_1/@\aletter'_1 \bowtie \aquery_2/@\aletter'_2$,
we have that $\aquery_1 = \emptyquery$
and that $\aquery_2$ is of the form
$\emptyquery$ or ${\edown} / \aquery'_2$ or ${\eright} / \aquery'_2$.
\end{itemize}
A \emph{safety} (resp., \emph{co-safety}) query or qualifier is
one in which each occurence of
$\edown$, ${\edown}^*$ or ${\eright}^*$
is under an odd (resp., even) number of negations.
Since infinite XML documents may contain infinite siblinghoods,
$\edown$, ${\edown}^*$ and ${\eright}^*$ are exactly the queries that
may require existence of a node which can be unboundedly far.

The semantics of queries and qualifiers is standard
(cf., e.g., \cite{Geerts&Fan05}).
We write the satisfaction relations as
$\adatatree, \anode, \anode' \models \aquery$ and
$\adatatree, \anode \models \aqual$, where
$\adatatree$ is an XML tree
$\tuple{\nodes, \aalphabet \cup \aalphabet', \Lambda, \Delta}$,
and $\anode$ and $\anode'$ are $\aalphabet$-labelled nodes.
The definition is recursive over the grammars of queries and qualifiers,
and can be found in Figure~\ref{f:sem.q.q}.
We omit the Boolean cases, and we write
$\widehat{\edown}$, $\widehat{\eup}$, $\widehat{\eright}$ and $\widehat{\eleft}$
for the relations between $\aalphabet$-labelled nodes that correspond to
the child, parent, next-sibling and previous-sibling relations (respectively)
in the document that $\adatatree$ represents.

\begin{figure}
\begin{center}
\begin{eqnarray*}
\adatatree, \anode, \anode' \models \emptyquery
& \stackrel{\mathrm{def}}{\Leftrightarrow} &
\anode = \anode'
\\
\adatatree, \anode, \anode' \models \{\edown, \eup, \eright, \eleft\}
& \stackrel{\mathrm{def}}{\Leftrightarrow} &
\anode \{\widehat{\edown}, \widehat{\eup},
         \widehat{\eright}, \widehat{\eleft}\} \anode'
\\
\adatatree, \anode, \anode' \models \{\edown^*, \eup^*, \eright^*, \eleft^*\}
& \stackrel{\mathrm{def}}{\Leftrightarrow} &
\anode \{\widehat{\edown}^*, \widehat{\eup}^*,
         \widehat{\eright}^*, \widehat{\eleft}^*\} \anode'
\\
\adatatree, \anode, \anode' \models \aquery_1 / \aquery_2
& \stackrel{\mathrm{def}}{\Leftrightarrow} &
\mathrm{there\ exists\ } \anode'' \mathrm{\ such\ that} \\ & &
\adatatree, \anode, \anode'' \models \aquery_1 \mathrm{\ and\ }
\adatatree, \anode'', \anode' \models \aquery_2
\\
\adatatree, \anode, \anode' \models \aquery_1 \cup \aquery_2
& \stackrel{\mathrm{def}}{\Leftrightarrow} &
\adatatree, \anode, \anode' \models \aquery_1 \mathrm{\ or\ }
\adatatree, \anode, \anode' \models \aquery_2
\\
\adatatree, \anode, \anode' \models \aquery[\aqual]
& \stackrel{\mathrm{def}}{\Leftrightarrow} &
\adatatree, \anode, \anode' \models \aquery \mathrm{\ and\ }
\adatatree, \anode' \models \aqual
\\
\adatatree, \anode \models \aquery?
& \stackrel{\mathrm{def}}{\Leftrightarrow} &
\mathrm{there\ exists\ } \anode' \mathrm{\ such\ that\ }
\adatatree, \anode, \anode' \models \aquery
\\
\adatatree, \anode \models \aletter
& \stackrel{\mathrm{def}}{\Leftrightarrow} &
\Lambda(\anode) = \aletter
\\
\adatatree, \anode \models
\aquery_1/@\aletter'_1 \bowtie \aquery_2/@\aletter'_2
& \stackrel{\mathrm{def}}{\Leftrightarrow} &
\mathrm{there\ exist\ }
\anode_1, k_1, \anode_2, k_2
\mathrm{\ such\ that} \\ & &
\adatatree, \anode, \anode_1 \models \aquery_1,
k_1 \leq \length{\mathtt{atts}(\anode_1)},
\Lambda(\anode_1 \cdot 0^{k_1}) = \aletter'_1, \\ & &
\adatatree, \anode, \anode_2 \models \aquery_2,
k_2 \leq \length{\mathtt{atts}(\anode_2)},
\Lambda(\anode_2 \cdot 0^{k_2}) = \aletter'_2, \\ & &
\Delta(\anode_1 \cdot 0^{k_1}) \bowtie \Delta(\anode_2 \cdot 0^{k_2})
\end{eqnarray*}
\end{center}
\caption{Semantics of Queries and Qualifiers}
\label{f:sem.q.q}
\end{figure}

We say that $\adatatree$ satisfies $\aquery$ iff
$\adatatree, \emptyword, \anode' \models \aquery$
for some $\anode'$.

\begin{example}
\label{ex:queries}
Suppose $\aletter'_1, \aletter'_2 \in \aalphabet'$.
The forward query
$\aquery_{\aletter'_1, \aletter'_2} =
 {\eright}^* / {\edown}^*[\emptyquery/@\aletter'_1 =
                          ({\edown} / {\edown}^*)/@\aletter'_2]$
is satisfied by $\aalphabet$-labelled nodes $\anode_0$ and $\anode_1$
iff $\anode_0 \widehat{\eright}^* \widehat{\edown}^* \anode_1$ and
there exists $\anode_2$ such that
$\anode_1 \widehat{\edown}^+ \anode_2$ and
the value of attribute $\aletter'_1$ at $\anode_1$ is equal to
the value of attribute $\aletter'_2$ at $\anode_2$.
Hence, the safety forward query
$\emptyquery[\neg (\aquery_{\aletter'_1, \aletter'_2}?)]$
is satisfied by an XML tree over $\aalphabet$ and $\aalphabet'$
(whose root may have younger siblings) iff
the value of $\aletter'_1$ at a node is never equal to
the value of $\aletter'_2$ at a descendant.
\end{example}

Suppose a query $\aquery$ and a DTD $\atreeaut$
are over the same element types and attribute names.
We say that $\aquery$ is satisfiable relative to $\atreeaut$ iff
there exists an XML tree which satisfies $\aquery$ and $\atreeaut$.
Finitary satisfiability restricts to finite XML trees.

\subsubsection*{Complexity of Satisfiability}

Let us regard a forward qualifier $\aqual$ over
element types $\aalphabet$ and attribute names $\aalphabet'$
as finitely equivalent to an ATRA$_1$ $\aregaut$ with alphabet
$\aalphabet \cup \aalphabet'$
iff, for every finite XML tree $\adatatree$ over $\aalphabet$ and $\aalphabet'$,
and $\aalphabet$-labelled node $\anode$, we have
$\adatatree, \anode \models \aqual$ iff
$\aregaut$ accepts the subtree of $\adatatree$ rooted at $\anode$.
For safety (resp., co-safety) $\aqual$,
safety (resp., co-safety) equivalence is defined by also considering
infinite XML trees and safety (resp., co-safety) acceptance by $\aregaut$.

To formalise the corresponding notions for queries,
we introduce the following kind of automata ``with holes''.
\emph{Query automata} are defined in the same way as
ATRA$_1$ (cf.\ Section~\ref{ss:ATRA}), except that:
\begin{itemize}
\item
transition formulae may contain a new atomic formula $\mathsf{H}$;
\item
no path in the successor graph
from the initial state to a state $\aloc$ such that
$\mathsf{H}$ occurs in some transition formula at $\aloc$
may contain an update edge.
\end{itemize}
The vertices of the successor graph are all states,
there is an edge from $\aloc$ to $\alocbis$ iff
$\alocbis(0, \downarrow)$, $\alocbis(0, \ndownarrow)$,
$\alocbis(1, \downarrow)$ or $\alocbis(1, \ndownarrow)$
occurs in some transition formula at $\aloc$,
and such an edge is called update iff
$\alocbis(0, \downarrow)$ or
$\alocbis(1, \downarrow)$
occurs in some transition formula at $\aloc$.

To define a run of a query automaton
on a data tree $\adatatree$ with the same alphabet
and with respect to a set of nodes $\nodes'$,
we augment the definition of runs of ATRA$_1$
so that whenever a transition formula is evaluated at a node $\anode$,
each occurence of $\mathsf{H}$ is treated as
$\top$ if $\anode \in \nodes'$, and as
$\bot$ if $\anode \notin \nodes'$.
Acceptance of a finite data tree, safety acceptance, and co-safety acceptance,
all with respect to a set of nodes for interpreting $\mathsf{H}$,
are then defined as for ATRA$_1$.

For a query automaton $\aregaut$ and
an ATRA$_1$ or query automaton $\aregaut'$
with the same alphabet and initial states
$\aloc_I$ and $\aloc'_I$ (respectively),
we define the substitution of $\aregaut'$ for the hole in $\aregaut$
by forming a disjoint union of $\aregaut$ and $\aregaut'$,
taking $\aloc_I$ as the initial state,
and substituting each occurence of $\mathsf{H}$ in each transition formula
$\delta(\aloc, \aletter, b)$ of $\aregaut$
by $\delta(\aloc'_I, \aletter, b)$.
Observe that the unreachability in $\aregaut$
of $\mathsf{H}$ from $\aloc_I$ by a path with an update edge
means that the composite automaton transmits
initial register values to $\aregaut'$ without changes.

Now, we say that a forward query $\aquery$ over
element types $\aalphabet$ and attribute names $\aalphabet'$
is finitely equivalent to a query automaton $\aregautbis$ with alphabet
$\aalphabet \cup \aalphabet'$
iff, for every finite XML tree $\adatatree$ over $\aalphabet$ and $\aalphabet'$,
$\aalphabet$-labelled node $\anode$,
and set $\nodes'$ of $\aalphabet$-labelled nodes, we have
$\adatatree, \anode, \anode' \models \aquery$
for some $\anode' \in \nodes'$ iff
$\aregautbis$ accepts the subtree of $\adatatree$ rooted at $\anode$
with respect to $\nodes'$.
For safety (resp., co-safety) $\aquery$,
safety (resp., co-safety) equivalence is defined by also considering
infinite XML trees and safety (resp., co-safety) acceptance by $\aregautbis$.

\begin{theorem}
\label{th:X2RA}
For each forward query $\aquery$ (resp., forward qualifier $\aqual$)
over $\aalphabet$ and $\aalphabet'$, a finitely equivalent
query automaton $\aregautbis_\aquery^{\aalphabet, \aalphabet'}$
(resp., ATRA$_1$ $\aregaut_\aqual^{\aalphabet, \aalphabet'}$)
is computable in logarithmic space.
If $\aquery$ (resp., $\aqual$) is safety,
then it is safety equivalent to
$\aregautbis_\aquery^{\aalphabet, \aalphabet'}$
(resp., $\aregaut_\aqual^{\aalphabet, \aalphabet'}$).
\end{theorem}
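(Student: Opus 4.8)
The plan is to proceed by mutual structural induction on the forward query $\aquery$ and the forward qualifier $\aqual$, constructing in logarithmic space a query automaton $\aregautbis_\aquery^{\aalphabet, \aalphabet'}$ for each query and an ATRA$_1$ $\aregaut_\aqual^{\aalphabet, \aalphabet'}$ for each qualifier. The guiding invariant is that the single register carries only the attribute value which, by the forward restriction on data comparisons, always sits at the ``current'' document node, and that the hole $\mathsf{H}$ marks membership in the candidate target set $\nodes'$ of a query. On finite XML trees the three acceptance modes coincide, so finite equivalence will follow uniformly from this invariant.

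For qualifiers I would treat $\aletter$ by a one-state automaton checking the current letter, $\aqual_1 \wedge \aqual_2$ by the intersection construction of Proposition~\ref{pr:closure.RA}, and $\neg \aqual$ by the dual $\overline{\aregaut_\aqual^{\aalphabet, \aalphabet'}}$. The case $\aquery?$ is obtained from $\aregautbis_\aquery^{\aalphabet, \aalphabet'}$ by substituting $\top$ for every occurrence of $\mathsf{H}$, which asserts existence of some target node. For queries, $\emptyquery$ is a single state succeeding exactly via $\mathsf{H}$; composition $\aquery_1 / \aquery_2$ is realised by substituting $\aregautbis_{\aquery_2}^{\aalphabet, \aalphabet'}$ for the hole of $\aregautbis_{\aquery_1}^{\aalphabet, \aalphabet'}$ as defined above; and $\aquery_1 \cup \aquery_2$ by a new disjunctive initial state. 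For $\aquery[\aqual]$ I would substitute into the hole of $\aregautbis_\aquery^{\aalphabet, \aalphabet'}$ the conjunction of $\mathsf{H}$ with a fresh launch of $\aregaut_\aqual^{\aalphabet, \aalphabet'}$, resetting the register to the reached node's datum; since qualifier automata contain no holes, the path to $\mathsf{H}$ remains update-free and the result is a valid query automaton.

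The genuinely XPath-specific work lies in the axis steps and the data comparisons, both of which must thread through the attribute-spine encoding of XML trees, where each $\aalphabet$-labelled node heads a left spine of attribute nodes whose last node carries the first child on its left and the next sibling on its right. I would give finite-state gadgets walking this spine: $\eright$ descends the spine and takes the right child; $\edown$ descends the spine, takes the left child, then ranges over all next siblings; and $\edown^*, \eright^*$ iterate these via self-loops. None of these gadgets touches the register, so the register is transmitted unchanged, exactly as the substitution requires. For a data comparison $\emptyquery/@\aletter'_1 \bowtie \aquery_2/@\aletter'_2$ I would use the forward restriction directly: since the left query is $\emptyquery$, the $\aletter'_1$-value lies on the current node's attribute spine, so one thread walks that spine, updates the register to the $\aletter'_1$-datum, and then navigates $\aquery_2$ (necessarily of shape $\emptyquery$, ${\edown}/\aquery'_2$ or ${\eright}/\aquery'_2$), carrying the stored datum forward until it reaches an $\aletter'_2$-node, where it tests $\bowtie$ against the register.

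The step I expect to be the main obstacle is the safety claim. On infinite XML trees I must verify that safety (resp. co-safety) acceptance still computes the intended relation and that the negation-parity definition of safety queries is matched exactly. Because $\edown$, $\edown^*$ and $\eright^*$ are the only operators that can demand a node unboundedly far, their gadgets are the only ones whose finality can be witnessed by genuinely infinite runs; I would check that an occurrence under an even (resp. odd) number of negations lands, after the dualisations introduced by $\neg$, in the co-safety (resp. safety) mode, so that safety-acceptance of the composite coincides with satisfaction of a safety query. Finite equivalence is uniform, and all care is concentrated in aligning this infinite-run semantics with the parity condition.
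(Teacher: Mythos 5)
Your proposal is correct and follows essentially the same route as the paper: a mutual recursion over the grammars, query automata with a hole $\mathsf{H}$ composed by substitution, dualisation for negation, spine-walking gadgets for the axes and the data comparisons, and a simultaneous induction establishing safety and co-safety equivalence together so that the negation case goes through. The only cosmetic difference is in $\aquery[\aqual]$, where the paper substitutes a conjunctive disjoint union of $\aregautbis_\emptyquery^{\aalphabet,\aalphabet'}$ and $\aregaut_\aqual^{\aalphabet,\aalphabet'}$ for the hole rather than explicitly resetting the register, but both devices achieve the same effect.
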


\begin{proof}
The translations are defined recursively over
the grammars of queries and qualifiers:
\begin{itemize}
\item
$\aregautbis_\emptyquery^{\aalphabet, \aalphabet'}$,
$\aregautbis_\edown^{\aalphabet, \aalphabet'}$,
$\aregautbis_\eright^{\aalphabet, \aalphabet'}$,
$\aregautbis_{\edown^*}^{\aalphabet, \aalphabet'}$,
$\aregautbis_{\eright^*}^{\aalphabet, \aalphabet'}$ and
$\aregaut_\aletter^{\aalphabet, \aalphabet'}$
are straightforward to define;
\item
$\aregautbis_{\aquery \cup \aquery'}^{\aalphabet, \aalphabet'}$
is formed from $\aregautbis_\aquery^{\aalphabet, \aalphabet'}$
and $\aregautbis_{\aquery'}^{\aalphabet, \aalphabet'}$
by disjunctive disjoint union,
$\aregaut_{\neg \aqual}^{\aalphabet, \aalphabet'}$
is formed from $\aregaut_\aqual^{\aalphabet, \aalphabet'}$
by dualisation, and
$\aregaut_{\aqual \wedge \aqual'}^{\aalphabet, \aalphabet'}$
is formed from $\aregaut_\aqual^{\aalphabet, \aalphabet'}$
and $\aregaut_{\aqual'}^{\aalphabet, \aalphabet'}$
by conjunctive disjoint union
(cf.\ the proof of Proposition~\ref{pr:closure.RA});
\item
to obtain $\aregautbis_{\aquery / \aquery'}^{\aalphabet, \aalphabet'}$,
we substitute $\aregautbis_{\aquery'}^{\aalphabet, \aalphabet'}$
for the hole in $\aregautbis_\aquery^{\aalphabet, \aalphabet'}$;
\item
to obtain $\aregautbis_{\aquery[\aqual]}^{\aalphabet, \aalphabet'}$,
we substitute a conjunctive disjoint union of
$\aregautbis_\emptyquery^{\aalphabet, \aalphabet'}$ and
$\aregaut_\aqual^{\aalphabet, \aalphabet'}$
for the hole in $\aregautbis_\aquery^{\aalphabet, \aalphabet'}$;
\item
$\aregaut_{\aquery?}^{\aalphabet, \aalphabet'}$
is formed from $\aregautbis_\aquery^{\aalphabet, \aalphabet'}$
by substituting $\top$ for $\mathsf{H}$;
\item
an automaton for $\emptyquery/@\aletter'_1 = ({\edown} / \aquery)/@\aletter'_2$
is formed by substituting
the second automaton depicted in Figure~\ref{f:eq.down}
(cf.\ Example~\ref{ex:ATRA} for depicting conventions)
for the hole in $\aregautbis_\aquery^{\aalphabet, \aalphabet'}$,
and substituting the result
for the hole in the first automaton depicted in Figure~\ref{f:eq.down};
\item
the remaining cases in the grammar of qualifiers are handled similarly.
\end{itemize}
The required equivalences, as well as that
if $\aquery$ (resp., $\aqual$) is co-safety
then it is co-safety equivalent to
$\aregautbis_\aquery^{\aalphabet, \aalphabet'}$
(resp., $\aregaut_\aqual^{\aalphabet, \aalphabet'}$),
are shown simultaneously by induction.
\end{proof}

\begin{figure}
\setlength{\unitlength}{2em}
\begin{center}
\begin{picture}(18.5,6)(-.5,1)
\gasset{Nadjust=wh}
\node[Nmarks=i,iangle=180,ilength=1](0)(1,4){$\aloc_0$}
\node(1)(4,4){$\aloc_1$}
\drawedge(0,1){$0, \ndownarrow$}
\node[Nadjustdist=0](1a)(4,7){}
\drawedge[AHnb=0,curvedepth=.5](1,1a)
  {$\aalphabet' \setminus \{\aletter'_1\}$}
\drawedge[curvedepth=.5](1a,1){$0, \ndownarrow$}
\node[Nadjustdist=0](1b)(7,4){}
\node(2)(10,4){$\aloc_2$}
\drawedge[AHnb=0,ELpos=70](1,1b){$\aletter'_1$}
\drawedge[ELpos=30](1b,2){$0, \downarrow$}
\node[Nadjustdist=0](2a)(8.5,7){}
\drawedge[AHnb=0,curvedepth=.5](2,2a){$\aalphabet'$}
\drawedge[curvedepth=.5](2a,2){$0, \ndownarrow$}
\node[Nframe=n](2h)(13,5.5){$\mathsf{H}$}
\drawedge[curvedepth=.5](2,2h){$\aalphabet$}
\node[Nadjustdist=0](2b)(13,4){}
\node(3)(16,4){$\aloc_3$}
\drawedge[AHnb=0,ELpos=70](2,2b){$\aalphabet$}
\drawedge[ELpos=30](2b,3){$0, \ndownarrow$}
\node[Nadjustdist=0](3a)(16,7){}
\drawedge[AHnb=0,curvedepth=.5](3,3a){$\aalphabet'$}
\drawedge[curvedepth=.5](3a,3){$0, \ndownarrow$}
\node[Nadjustdist=0](3b)(13,2.5){}
\drawedge[AHnb=0,curvedepth=.5,ELpos=70](3,3b){$\aalphabet'$}
\drawedge[curvedepth=.5,ELpos=30](3b,2){$1, \ndownarrow$}
\node[Nadjustdist=0](2c)(8.5,1){}
\drawedge[AHnb=0,curvedepth=-.5,ELside=r](2,2c){$\aalphabet$}
\drawedge[curvedepth=-.5,ELside=r](2c,2){$1, \ndownarrow$}
\end{picture}
\end{center}
\begin{center}
\begin{picture}(8,3.5)(-.5,3.5)
\gasset{Nadjust=wh}
\node[Nmarks=i,iangle=180,ilength=1](0p)(1,4){$\aloc'_0$}
\node(1p)(4,4){$\aloc'_1$}
\drawedge(0p,1p){$0, \ndownarrow$}
\node[Nadjustdist=0](1pa)(4,7){}
\drawedge[AHnb=0,curvedepth=.5](1p,1pa){$\aalphabet' \setminus \{\aletter'_2\}$}
\drawedge[curvedepth=.5](1pa,1p){$0, \ndownarrow$}
\node[Nframe=n](1pt)(7,4){$\top$}
\drawedge(1,1pt){$\aletter'_2, =$}
\end{picture}
\end{center}
\caption{Defining
$\aregaut_{\emptyquery/@\aletter'_1 = ({\edown} / \aquery)/@\aletter'_2}
         ^{\aalphabet, \aalphabet'}$}
\label{f:eq.down}
\end{figure}

\begin{theorem}
\label{th:X}
\begin{itemize}
\item[(a)]
For forward XPath and arbitrary DTDs,
satisfiability over finite XML trees is decidable.
\item[(b)]
For safety forward XPath and arbitrary DTDs,
satisfiability over finite or infinite XML trees is decidable.
\end{itemize}
\end{theorem}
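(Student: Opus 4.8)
The plan is to reduce satisfiability of a forward query relative to a DTD to nonemptiness of an ATRA$_1$, and then invoke Theorem~\ref{th:ATRA1} for part~(a) and Theorem~\ref{th:safety} for part~(b). Fix a forward query $\aquery$ and a DTD $\atreeaut$ over the same element types $\aalphabet$ and attribute names $\aalphabet'$. Recall that $\adatatree$ satisfies $\aquery$ iff the qualifier $\aquery?$ holds at the root; by Theorem~\ref{th:X2RA}, $\aquery?$ is finitely equivalent to the ATRA$_1$ $\aregaut_{\aquery?}^{\aalphabet, \aalphabet'}$ obtained from the query automaton $\aregautbis_\aquery^{\aalphabet, \aalphabet'}$ by substituting $\top$ for $\mathsf{H}$, so that $\adatatree$ satisfies $\aquery$ iff $\aregaut_{\aquery?}^{\aalphabet, \aalphabet'}$ accepts $\adatatree$. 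First I would also build an ATRA$_1$ $\aregaut_\atreeaut$ that accepts exactly the data trees which are XML trees (a regular structural constraint on the underlying tree) and whose underlying tree is accepted by $\atreeaut$. Since $\atreeaut$ is a forward nondeterministic tree automaton that ignores data, its transitions can be encoded directly as positive Boolean formulae over $\ndownarrow$-atoms (a disjunction over the automaton's choices, each a conjunction fixing the two child states), and the ``is an XML tree'' condition is likewise a regular tree language; hence $\aregaut_\atreeaut$ is an ATRA$_1$ (indeed a purely nondeterministic, data-blind one) computable in logarithmic space.

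For part~(a), I would take the conjunctive combination of $\aregaut_{\aquery?}^{\aalphabet, \aalphabet'}$ and $\aregaut_\atreeaut$, using closure under intersection for finite-data-tree acceptance (Proposition~\ref{pr:closure.RA}(a)). The resulting ATRA$_1$ $\aregaut$ satisfies that $\mathrm{L}^{\mathit fin}(\aregaut)$ is exactly the set of finite XML trees that satisfy both $\aquery$ and $\atreeaut$. Therefore $\aquery$ is finitarily satisfiable relative to $\atreeaut$ iff $\mathrm{L}^{\mathit fin}(\aregaut) \neq \emptyset$, which is decidable by Theorem~\ref{th:ATRA1}.

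For part~(b), suppose $\aquery$ is a safety forward query. Since forming $\aquery?$ adds no negation, $\aquery?$ is a safety qualifier, and by Theorem~\ref{th:X2RA} the automaton $\aregaut_{\aquery?}^{\aalphabet, \aalphabet'}$ is safety equivalent to it. The automaton $\aregaut_\atreeaut$ is likewise a safety ATRA$_1$: DTDs are defined to process infinite trees in safety mode, so reading the data-blind encoding of $\atreeaut$ under safety acceptance reproduces exactly the DTD semantics on both finite and infinite XML trees. Using closure of safety ATRA$_1$ under intersection (Proposition~\ref{pr:closure.RA}(b)), I would form a safety ATRA$_1$ $\aregaut$ with $\mathrm{L}^{\mathit saf}(\aregaut)$ equal to the set of finite or infinite XML trees satisfying $\aquery$ and $\atreeaut$. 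It then remains to decide whether $\mathrm{L}^{\mathit saf}(\aregaut) \neq \emptyset$; taking an automaton whose safety language is empty (e.g.\ one whose initial transition formula is $\bot$, which admits no run since the root is nonleaf), this is an instance of non-inclusion, decidable by Theorem~\ref{th:safety}.

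The main obstacle I anticipate is not any single hard argument but the careful bookkeeping needed to align acceptance modes and to verify that the XML-tree structural constraint together with the DTD is captured by a single data-blind (safety) ATRA$_1$ of the appropriate kind, and that substituting $\top$ for $\mathsf{H}$ in the query automaton correctly internalises the existential ``for some $\anode'$'' in the definition of a tree satisfying $\aquery$. All the genuinely difficult work---the translation of queries and qualifiers (Theorem~\ref{th:X2RA}) and the decidability of nonemptiness and of inclusion (Theorems~\ref{th:ATRA1} and~\ref{th:safety})---is already in hand, so this final step is essentially an assembly of those components.
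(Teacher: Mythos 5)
Your proposal is correct, and its skeleton (Theorem~\ref{th:X2RA} to get $\aregaut_{\aquery?}^{\aalphabet, \aalphabet'}$, then the decidability results of Sections~\ref{s:dec.fin} and~\ref{s:safety}) matches the paper's. The one genuine difference is where the DTD and the XML-tree structural constraint are folded in. You encode $\atreeaut$ (together with the regular ``is an XML tree'' condition) as a data-blind ATRA$_1$ built from $\ndownarrow$-atoms only, intersect at the ATRA$_1$ level via Proposition~\ref{pr:closure.RA}, and then invoke the \emph{statements} of Theorems~\ref{th:ATRA1} and~\ref{th:safety} — for (b) reducing nonemptiness to non-inclusion against an automaton with empty safety language. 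The paper instead keeps the DTD out of the register automaton: it translates $\aregaut_{\aquery?}^{\aalphabet, \aalphabet'}$ into an ITCA (for (a)) or an ITCANT without $\emptyword$-cycles (for (b), by re-entering the proof of Theorem~\ref{th:safety} with an empty-language second automaton), and only then intersects with the forward nondeterministic tree automaton at the counter-automaton level, where the product is a trivial synchronisation on letters. Both routes are sound: your encoding of a nondeterministic tree automaton as a disjunction of conjunctions of $\ndownarrow$-atoms is faithful under both finite and safety acceptance (junk threads only make acceptance harder, so the minimal one-thread-per-node run witnesses equivalence, and the leaf-finality conditions coincide with the DTD's safety-mode semantics). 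What your version buys is that it treats the earlier theorems as black boxes; what the paper's version buys is that it avoids re-encoding a plain tree automaton through the register-automaton formalism and keeps the DTD product where it is cheapest. Your closing caveats (alignment of acceptance modes, and that substituting $\top$ for $\mathsf{H}$ internalises the existential over $\anode'$) are exactly the right things to check, and they do go through as you expect.
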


\begin{proof}
Given a forward query $\aquery$ and a DTD $\atreeaut$ over
element types $\aalphabet$ and attribute names $\aalphabet'$,
by Theorem~\ref{th:X2RA}, an ATRA$_1$
$\aregaut_{\aquery?}^{\aalphabet, \aalphabet'}$
is computable, which is finitely equivalent to the qualifier $\aquery?$.
We can then compute an ITCA
$\acaut(\aregaut_{\aquery?}^{\aalphabet, \aalphabet'})$
as in the proof of Theorem~\ref{th:ATRA1},
which recognises exactly trees obtained by erasing data from
finite XML trees that satisfy $\aquery$.
To conclude (a), we observe that ITCA
are closed (in logarithmic space) under intersections with
forward nondeterministic tree automata,
and apply Theorem~\ref{th:ITCA}.

For (b), supposing that $\aquery$ is safety,
by Theorem~\ref{th:X2RA} again, an ATRA$_1$
$\aregaut_{\aquery?}^{\aalphabet, \aalphabet'}$
is computable, which is safety equivalent to the qualifier $\aquery?$.
Applying the proof of Theorem~\ref{th:safety} to
$\aregaut_{\aquery?}^{\aalphabet, \aalphabet'}$ and
an ATRA$_1$ whose safety language is empty,
we can compute an ITCANT
$\acaut'(\aregaut_{\aquery?}^{\aalphabet, \aalphabet'})$,
which contains no cycles of $\emptyword$-transitions
and recognises exactly trees obtained by erasing data from
finite or infinite XML trees that satisfy $\aquery$.
It remains to observe that ITCANT with no cycles of $\emptyword$-transitions
are closed (in logarithmic space) under intersections
with forward nondeterministic tree automata,
and to recall that their nonemptiness was shown decidable
also in the proof of Theorem~\ref{th:safety}.
\end{proof}

We remark that, by the proof of \cite[Theorem~5.2]{Demri&Lazic09},
finitary satisfiability for forward XPath with DTDs
is not primitive recursive, even without sibling axes
(i.e., $\eright$ and $\eright^*$).

\section{Concluding Remarks}

It would be interesting to know more about the complexities of
nonemptiness for safety ATRA$_1$ and
satisfiability for safety forward XPath with DTDs.
By Theorem~\ref{th:safety}, the former is decidable and not elementary,
and by Theorem~\ref{th:X}(b), the latter is decidable.

\begin{acks}
We are grateful to the referees for helping us improve the presentation.
\end{acks}

\bibliographystyle{acmtrans}
\bibliography{freeze_j}

\begin{received}
  Received May 2008;
  revised March 2010;
  accepted June 2010
  \end{received}

\end{document}